\def\eqns#1{\begin{equation*}#1\end{equation*}}
\def\eqnl#1#2{\begin{equation}\label{#1}#2\end{equation}}
\def\eqnsa#1{\begin{subequations}\begin{align*}#1\end{align*}\end{subequations}}
\def\eqnmla#1#2{\begin{subequations}\label{#1}\begin{align}#2\end{align}\end{subequations}}
\def\d{\mathrm{d}}
\def\s{\mathrm{s}}
\def\bsG{\bm{G}}
\def\bsI{\bm{I}}
\def\bsp{\bm{p}}
\def\bsx{\bm{x}}
\def\bsX{\bm{X}}
\def\bsz{\bm{z}}
\def\bsZ{\bm{Z}}
\def\bseta{\bm{\eta}}
\def\bspi{\bm{\pi}}
\def\calB{\mathcal{B}}
\def\calC{\mathcal{C}}
\def\calN{\mathcal{N}}
\def\calO{\mathcal{O}}
\def\calS{\mathcal{S}}
\def\calT{\mathcal{T}}
\def\calU{\mathcal{U}}
\def\calV{\mathcal{V}}
\def\bbE{\mathbb{E}}
\def\bbN{\mathbb{N}}
\def\bbM{\mathbb{M}}
\def\bbP{\mathbb{P}}
\def\bbR{\mathbb{R}}
\DeclareMathOperator*{\argmin}{argmin}
\DeclareMathOperator{\Var}{Var}
\DeclareMathOperator{\erf}{erf}
\DeclareMathOperator{\Lip}{Lip}
\def\AND{\qquad\mbox{ and }\qquad}
\def\defeq{\doteq}
\def\given{\,|\,}  
\def\pf{\#}  
\def\tr{\mathrm{t}}  
\newacronym{sde}{SDE}{stochastic differential equation}
\newacronym{mse}{MSE}{mean squared error}
\newacronym{mlmc}{MLMC}{multilevel Monte Carlo}
\newacronym{mlpf}{MLPF}{multilevel particle filter}
\newacronym{pf}{PF}{particle filter}
\newacronym{ml}{ML}{multilevel}
\newacronym{wrt}{w.r.t.\@}{with respect to}
\title{Multilevel Monte Carlo for Smoothing via Transport Methods}
\author{Jeremie Houssineau%
\thanks{DSAP, National University of Singapore. Email: \href{mailto:stahje@nus.edu.sg}{stahje@nus.edu.sg}}
\and
Ajay Jasra%
\thanks{DSAP, National University of Singapore. Email: \href{mailto:staja@nus.edu.sg}{staja@nus.edu.sg}}
\and
Sumeetpal S.\ Singh%
\thanks{Department of Engineering, University of Cambridge  and The Alan Turing Institute. Email:~\href{mailto:sss40@cam.ac.uk}{sss40@cam.ac.uk}}
}
\begin{document}

\maketitle

\begin{abstract}
In this article we consider recursive approximations of the smoothing distribution associated to partially observed \glspl{sde}, which are observed discretely in time.  Such models appear in a wide variety of applications including econometrics, finance and engineering. This problem is notoriously challenging, as the smoother is not available analytically and hence require numerical approximation. This usually consists by applying a time-discretization to the \gls{sde}, for instance the Euler method, and then applying a numerical (e.g.\ Monte Carlo) method to approximate the smoother. This has lead to a vast literature on methodology for solving such problems, perhaps the most popular of which is based upon the \gls{pf} e.g.\ \cite{Doucet2011}. In the context of filtering for this class of problems, it is well-known that the particle filter can be improved upon in terms of cost to achieve a given \gls{mse} for estimates. This in the sense that the computational effort can be reduced to achieve this target \gls{mse}, by using \gls{ml} methods \cite{Giles2008,Giles2015,Heinrich2001}, via the \gls{mlpf} \cite{Gregory2016,Jasra2015,Jasra2018}. For instance, to obtain a \gls{mse} of $\mathcal{O}(\epsilon^2)$ for some $\epsilon>0$ when approximating filtering distributions associated with Euler-discretized diffusions with constant diffusion coefficients, the cost of the \gls{pf} is $\mathcal{O}(\epsilon^{-3})$ while the cost of the \gls{mlpf} is $\mathcal{O}(\epsilon^{-2}\log(\epsilon)^2)$. In this article we consider a new approach to replace the particle filter, using transport methods in \cite{Spantini2017}. In the context of filtering, one expects that the proposed method improves upon the \gls{mlpf} by yielding, under assumptions, a \gls{mse} of $\mathcal{O}(\epsilon^2)$ for a cost of $\mathcal{O}(\epsilon^{-2})$.
This is established theoretically in an ``ideal'' example and numerically in numerous examples.
\end{abstract}

\begin{keywords}
Transport map, Stochastic differential equation, Multilevel Monte Carlo
\end{keywords}

\begin{AMS}
62M05,    
60J60    
\end{AMS}

\section{Introduction}

The smoothing problem often refers to the scenario where one has an unobserved Markov chain (or signal) in discrete or continuous time and one is interested in inferring the hidden process on the basis of observations, which depend upon the hidden chain. The case we consider is where the hidden process follows a \gls{sde} and the observations are regularly recorded at discrete times; given the signal at a time $t$ the observation is assumed to be conditionally independent of all other random variables. The process of filtering is to infer some functional of the hidden state at time $t$ given all the observations at time $t$ and the smoothing problem to infer some functional of potentially all the states at the  discrete observation times again given all the observations. It is often of interest to do this recursively in time. This modelling context is relevant for many real applications in econometrics, finance and engineering; see e.g.\ \cite{Cappe2005} and the references therein.

The smoothing problem is notoriously challenging. Supposing one has access to the exact transition of the \gls{sde}, then unless the observation density is Gaussian and depends linearly on the hidden state and the transition density is also Gaussian depending linearly on the previous state, the filter and smoother are not analytically tractable (unless the state-space of the position of the diffusion at any given time is finite and of small cardinality); see \cite{Crisan2008}. However, it is seldom the case that even the transition density (or some unbiased approximation of it, e.g.\ \cite{Fearnhead2008} and the references therein) is available; this is assumed throughout the article. Thus typically, one time-discretizes the diffusion process and then one seeks to perform filtering and smoothing from the time-discretized model. This latter task is still challenging as it is still analytically intractable. There is a vast literature on how to numerically approximate the filter/smoother (e.g.\ \cite{Crisan2011}) and perhaps the most popular of which is the particle filter. This is a method whose cost grows linearly with the time parameter and generates $N$ samples in parallel. These samples are put through sampling and resampling operations. It is well-known that when estimating the filter, the error is uniform in time. For the smoother, the error often grows due to the so-called path degeneracy problem and indeed, there are many smoothing problems for which it is not appropriate; see \cite{Kantas2015} for some review and discussion. In the context of the problem in this article, when only considering the filter, ignoring the time parameter and under assumptions, to  obtain a \gls{mse} of $\mathcal{O}(\epsilon^2)$ for some $\epsilon>0$ the cost of the \gls{pf} is $\mathcal{O}(\epsilon^{-3})$. The \gls{mse} takes into account the exact filter (i.e.~the one with no time discretization).

\Gls{mlmc} methods \cite{Giles2008, Giles2015, Giles2009, Giles2014, Heinrich2001} are of interest in continuum systems which have to be discretized in one dimension, just as in this article (extensions to discretization in multiple dimensions have been proposed and studied in \cite{Crisan2017,Haji2016}). We explain the idea informally as follows: let the time parameter be fixed and denote by $p^L_t$ the filter associated to a (say Euler) discretization level $h_L>0$, set $X_t\in\mathbb{R}^d$, $d\geq 1$  and for $\varphi:\mathbb{R}^d\rightarrow\mathbb{R}$ bounded denote by $p^L_t(\varphi)$ the expectation of $\varphi$ \gls{wrt} the filter. Then the \gls{mlmc} method is based upon the following approach. Consider $0<h_L<h_{L-1}<\cdots<h_0<+\infty$ a sequence of discretizations, where $h_L$ is the most accurate (finest) discretization and $h_0$ the least (coarsest), the \gls{ml} identity is
\eqns{
p^L_t(\varphi) = \sum_{l=0}^L ( p^l_t - p^{l-1}_t)(\varphi)
}
where $p^{-1}_t$ is an arbitrary measure satisfying $p^{-1}_t(\varphi) = 0$ for every $\varphi$. The idea is then to sample $N_0$ independent samples from $p^0_t$ and then, independently for each $1\leq l \leq L$ independently sample $N_l$ coupled pairs from the pair $(p^l_t,p^{l-1}_t)$. The \gls{mlmc} estimator is then
\eqns{
\frac{1}{N_0}\sum_{i=1}^{N_0} \varphi(X^0_{t,i}) + \sum_{l=1}^L\frac{1}{N_l}\sum_{i=1}^{N_l}[\varphi(X^l_{t,i})-\varphi(X^{l-}_{t,i})]
}
where $\{X^0_{t,i}\}_{i=1}^{N_0}$ are i.i.d.\ $p^0_t$ and $\{(X^l_{t,i},X^{l-}_{t,i})\}_{i=1}^{N_l}$ are i.i.d.\ from a coupling of $(p^l_t,p^{l-1}_t)$. To obtain a \gls{mse} of $\mathcal{O}(\epsilon^2)$ one sets $L$ such that the squared bias is $\mathcal{O}(\epsilon^2)$ (the bias is known in the context of interest). If one has $\Var(\varphi(X^l_{t,1})-\varphi(X^{l-}_{t,1})) = \mathcal{O}(h_l^{\beta})$ for some $\beta>0$ then one can try to minimize (\gls{wrt} $N_1,\dots,N_L$) the cost $\sum_{l=1}^LN_lh_l^{-\zeta}$ ($\zeta=1$ for an Euler discretization) subject to the variance $1/N_0 + \sum_{l=1}^L h_l^{\beta}/N_l$ being $\mathcal{O}(\epsilon^2)$. \cite{Giles2008} finds a solution to this problem. The main issue in the context of smoothing, is that one (typically) does not know how to sample from the smoothers nor the couplings.

In \cite{Gregory2016,Jasra2015,Jasra2018} it is shown how to utilize the \gls{pf} to leverage on the potential decrease in cost to obtain a given \gls{mse}. This has been termed the MLPF. The idea is to use couplings in the Euler dynamics and the resampling operation of a \gls{pf}. This has been later refined in \cite{Sen2018}. To our knowledge, the only theoretical work for the \gls{mlpf} in \cite{Jasra2015}, shows that to obtain a \gls{mse} of $\mathcal{O}(\epsilon^2)$ the cost in \gls{mlpf} is $\mathcal{O}(\epsilon^{-2}\log(\epsilon)^2)$, for some specific (constant diffusion coefficient) models and under particular assumptions.  This is known to be worse than the rates obtained in \cite{Giles2008} in the case where there are no observations. Here and throughout, the time parameter is omitted from the discussion on cost and error, despite the fact that these are important considerations in general.

The main idea in this article is to adopt an alternative method to the \gls{pf}. The approach is to use transport methods \cite{Spantini2017}. Transport maps have been used for Bayesian inference \cite{ElMoselhy2012, Heng2015} and more specifically for parameter estimation in \cite{Parno2016} based on a related multi-scale idea. The basic idea is to obtain a map such that the image of samples from an easy-to-sample distribution through this map has exactly the type which one desires. In \cite{Spantini2017} it is shown how to develop numerical approximations of maps, associated exactly to the distributions of interest in this article. These approximations often induce i.i.d.~Monte Carlo approximations of expectations of interest, albeit with a numerical error associated to the approximation of the transport map.  As mentioned in \cite{JasLaw2017}, it is simple to induce coupled pairs using the method of \cite{Spantini2017} and this is exactly what is done in this paper. The potential advantages of this method relative to the \gls{mlpf} are then as follows:
\begin{enumerate}[label=(\roman*)]
\item \label{it:coupledResampling} The \gls{ml} rate lost by coupled resampling can be regained in the context of filtering.
\item \label{it:smoother} The method can be used for approximating the expectation of some functionals \gls{wrt} the smoother, whereas the approach in \cite{Jasra2015,Jasra2018} is typically not useful for smoothing at large time-lags.
\end{enumerate}
In this article we establish that \ref{it:coupledResampling} can hold in an ideal special case, where the model is linear and Gaussian and the transport map is exact. This result is reinforced by numerical examples which show that the result seems to hold more generally. The significance of \ref{it:coupledResampling} is that to obtain a \gls{mse} of $\mathcal{O}(\epsilon^2)$ the cost is $\mathcal{O}(\epsilon^{-2})$; this is better than the \gls{mlpf}. Point \ref{it:smoother} relates to the afore-mentioned path degeneracy effect, which can mean \glspl{pf} (and hence the \gls{mlpf}) are not so useful in the context of large lag smoothing.

The structure of the article is as follows: \Cref{sec:multiLevelSDE} introduces the model and transport methodology. \Cref{sec:MLMC} presents the multilevel approach and the MLPF as well as the mechanisms underlying the computation of transport maps for a given level of discretization. The efficiency of the proposed approach is shown numerically on increasingly challenging scenarios in \cref{sec:numericalStudy}.

\section[Methodology for SDE smoothing]{Methodology for \gls{sde} smoothing}
\label{sec:multiLevelSDE}

In this section, the considered notations and assumptions for the smoothing of \glspl{sde} are presented, together with a brief overview of the transport methodology.

\subsection[The SDE model]{The \gls{sde} model}

Throughout the article, all random variables will be assumed to be on the same complete probability space $(\Omega,\Sigma,\bbP)$ and will be denoted by upper-case letters, while their realisations will be in lower case. We consider a diffusion process $\bsX = \{X_t\}_{t\in[0,T]}$ on the space $\bbR^d$ of the form
\eqnl{eq:diffusion}{
\d X_t = a(X_t) \d t + b(X_t)\d W_t ,\qquad t \in [0,T],
}
where $T$ is the final time, $\{W_t\}_{t \in [0,T]}$ is the Brownian motion on $\bbR^d$, $a(\cdot)$ is in the set $\calC^2(\bbR^d,\bbR^d)$ of twice continuously differentiable mappings from $\bbR^d$ to itself and $b(\cdot)$ is in $\calC^2(\bbR^d, \bbM_d(\bbR))$ with $\bbM_d(\bbR)$ the space of square matrices of size $d$. The mapping $b$ is assumed to be such that $b(x)b(x)^{\tr}$ is positive definite for all $x \in \bbR^d$, with $\cdot^{\tr}$ denoting the transposition. Moreover, the drift and diffusion coefficients are assumed to be globally Lipschitz, i.e.\ there exists $c > 0$ such that
\eqns{
|a(x) - a(x')| + |b(x) - b(x')| \leq c|x-x'|
}
for all $x,x' \in \bbR^d$. The initial distribution of the process $\bsX$, i.e.\ the distribution of $X_0$, is denoted $p_0$ (and might be equal to $\delta_{x_0}$ for some initial condition $x_0 \in \bbR^d$). It is assumed that the $m$\textsuperscript{th}-order moment of $X_0$ defined as $\bbE(|X_0|^m)$ is finite for any $m \geq 1$. Probability density functions will be considered with respect to the Lebesgue measure on $\bbR^d$ and both probability measures and their corresponding density functions will be referred to by the same notation.

The distribution of $X_k$, $k \in \{1,\dots,T\}$, given a realisation $x_{k-1}$ of the state $X_{k-1}$ is denoted $Q(x_{k-1},\cdot)$. In addition to the fact that the expression of the Markov transition $Q$ is unavailable in general, it is not usually possible to devise an unbiased estimator for it or even to sample from it. In the case where $d=1$, one can obtain ``skeletons'' of exact paths using the algorithm of \cite{Beskos2005,Beskos2006}, however, the extension of this approach to \glspl{sde} of higher dimensions might not be possible \cite{AitSahalia2008}.

The diffusion process $\bsX$ is assumed to be observed in $\bbR^{d'}$, $d' \in \bbN$, at all the integer-valued times so that the final time $T$ is also assumed to be an integer. These assumptions are made for the sake of notational simplicity and can be easily removed. For all $k \in \{0,\dots,T\}$, the observation $Y_k$ is a random variable that is conditionally independent on the state $X_t$ at times $t \neq k$ given $X_k$. The observation process can be expressed in general as
\eqnl{eq:obsEquation}{
Y_k = g_k(X_k, V_k)
}
where $g_k$ is a deterministic observation function and where $\{V_k\}_{k=0}^T$ is a collection of independent random variables. It is assumed without any real loss of generality that both $g_k$ and the distribution of $V_k$ do not depend on the time index $k$, the corresponding likelihood for a realisation $y_k$ of $Y_k$ is denoted $\ell(X_k, y_k)$.

\subsection[Smoothing for SDEs]{Smoothing for \glspl{sde}}

Throughout the article, joint states in $\bbR^{d(n+1)}$ for some $n \in \bbN_0$ will be denoted either by $x_{k:k+n} \defeq (x_k,x_{k+1},\dots,x_{k+n})$ with $k \in \bbN$ or by $x_S$, with $S = \{s_0,s_1,\dots,s_n\}$ a finite subset of $[0,T]$ such that $s_i < s_j$ for all $0 \leq i < j \leq n$, defined as $x_S \defeq (x_{s_1},x_{s_2}, \dots,x_{s_n})$. The smoothing distribution associated with the \gls{sde} \cref{eq:diffusion} is defined formally as the joint law of the diffusion process $\bsX$ at all the integer times given realisations $y_0,\dots,y_T$ of the observation process \cref{eq:obsEquation}, and can be expressed for any $x_{0:T} \in \bbR^{d(T+1)}$ as
\eqns{
\bsp(x_{0:T}) = \dfrac{\ell(x_0, y_0) p_0(x_0) \prod_{k=1}^T \big[Q(x_{k-1}, x_k) \ell(x_k, y_k) \big] }{ \int \ell(x'_0, y_0) p_0(x'_0) \prod_{k=1}^T \big[Q(x'_{k-1}, x'_k) \ell(x'_k, y_k) \big] \d x'_{0:T} }.
}
The dependence of the smoothing distribution on the realisations $y_0,\dots,y_T$ of the observation process is omitted for the sake of notational simplicity. This is justified by the fact that these observations will be fixed in the remainder of the article so that the smoothing distribution $\bsp$ and its approximations will always be conditioned on the same given observations. The expression of $\bsp$ is a direct consequence of Bayes' theorem applied to the prior $p_0(x_0) \prod_{k=1}^T Q(x_{k-1}, x_k)$ describing the law of the unobserved (hidden) diffusion process together with the joint likelihood $\prod_{k=0}^T \ell(x_k, y_k)$ whose expression results from the conditional independence of the observations.

Using the same principle of implicit conditioning as with the smoothing distribution, the filtering distribution $p_k$ at time $k$ is defined as the law of $X_k$ given the realisations $y_0,\dots,y_k$ and is expressed recursively as
\eqns{
p_k(x_k) = \dfrac{\ell(x_k, y_k) \int Q(x_{k-1},x_k) p_{k-1}(x_{k-1}) \d x_{k-1}}{\int \ell(x'_k, y_k) Q(x'_{k-1},x'_k) p_{k-1}(x'_{k-1}) \d x'_k \d x'_{k-1}}
}
for any $x_k \in \bbR^d$ and any $k \in \{1,\dots,T\}$. The marginal distribution of $X_k$ induced by the smoothing distribution $\bsp$ corresponds to the filtering distribution $p_k$ when $k = T$ only.

The objective in this article can now be formally expressed as follows: to compute the expectation $\bsp(\varphi) \defeq \int \varphi(x_{0:T}) \bsp(x_{0:T}) \d x_{0:T}$ of some bounded measurable function $\varphi$ on $\bbR^{d(T+1)}$. Although the above formulation casts the considered problem into the standard Bayesian inference framework, the Markov transition $Q$ is unavailable in general, so that expressing analytically the distributions $\bsp$ and $p_k$ is not usually possible. The first step toward our objective is then to apply a time-discretization to the \gls{sde} \cref{eq:diffusion}, which, for the sake of simplicity, is illustrated with Euler's method for some discretization level $l \in \bbN_0$:
\eqnl{eq:EulerGen}{
X_{t+h_l} = X_t + h_l a(X_t) + \sqrt{h_l} b(X_t) U_t,
}
for some time-step $h_l = 2^{-l}$ and for all $t \in \calT_l \setminus \{T\}$ where $\calT_l \defeq \{0,h_l,\dots,T\}$, with $\{U_t\}_{t \in \calT_l \setminus \{T\}}$ a collection of independent Gaussian random variables with density $\phi(\cdot\,; 0,\bsI_d)$ where $\bsI_d$ is the identity matrix of size $d$. The choice of time step $h_l = 2^{-l}$ is made for the sake of convenience and is not necessary. The only requirement for both the \gls{mlpf} and the multilevel transport is that the ratio $h_{l-1}/h_l$ has to be an integer. The number of time steps from a given observation time up to and including the next observation time, that is in the interval $(k,k+1]$ for some $k \in \{0,\dots,T-1\}$, is $M_l = 2^l$. The numeral scheme \cref{eq:EulerGen} yields a Markov transition $K^l$ between two successive discretization times defined as
\eqns{
K^l(x, \cdot) = \phi\big(\cdot\, ; x + h_l a(x), h_l b(x)b(x)^{\tr}\big)
}
for any $x \in \bbR^d$, which enables the approximation of $Q$ by another Markov kernel $Q^l$ defined as
\eqns{
Q^l(x,\cdot) = \underbrace{K^l \dots K^l}_{M_l\text{ times}}(x,\cdot),
}
where $KK'(x,\cdot) = \int K(x,x') K'(x',\cdot) \d x'$ for any transition kernels $K$, $K'$. The smoothing distribution $\bsp^l$ induced by \cref{eq:EulerGen}, which approximates $\bsp$, is expressed on $\bbR^{d(M_lT+1)}$ instead of $\bbR^{d(T+1)}$ and is characterised by
\eqns{
\bsp^l(x_{\calT_l}) \propto p_0(x_0) \prod_{t \in \calT_l \setminus \{T\}} K^l\big(x_t, x_{t+h_l}\big) \prod_{k=0}^T \ell(x_k, y_k)
}
for any $x_{\calT_l} \in \bbR^{d(M_l T+1)}$. Marginalising \gls{wrt} all $x_t$ such that $t \notin \bbN_0$ gives a distribution on $\bbR^{d(T+1)}$ which depends on the same time steps as $\bsp$. It is understood that the error in the approximation of $Q$ and $\bsp$ by $Q^l$ and $\bsp^l$ decreases when $l$ increases and tend to $0$ as $l$ tends to infinity. The measure $\bsp^l(\varphi)$ of the function $\varphi$ is understood as the measure of the canonical extension $\bar\varphi$ of $\varphi$ from $\bbR^{d(T+1)}$ to $\bbR^{d(M_lT+1)}$ defined as
\eqns{
\bar\varphi(x_t) =
\begin{dcases*}
\varphi(x_t) & if $t \in \bbN_0$ \\
1 & otherwise.
\end{dcases*}
}
The extension $\bar\varphi$ of the function $\varphi$ can indeed be seen as canonical since it holds that
\eqnsa{
\bsp^l(\bar\varphi) & \propto \int \bar\varphi(x_{\calT_l}) p_0(x_0) \prod_{t \in \calT_l \setminus \{T\}} K^l\big(x_t, x_{t+h_l}\big) \prod_{k=0}^T \ell(x_k, y_k) \d x_{\calT_l} \\
& = \int \varphi(x_{0:T}) \ell(x_0, y_0) p_0(x_0) \prod_{k=1}^T \big[ Q^l(x_{k-1}, x_k) \ell(x_k, y_k) \big] \d x_{0:T},
}
as expected. Henceforth, $\bsp^l(\varphi)$ will be used has a shorthand notation for $\bsp^l(\bar\varphi)$ when there is no ambiguity.

At this stage, standard Bayesian inference methods can be easily applied. For instance, if $a$ and $b$ are linear and constant functions respectively and if the observation equation \eqref{eq:obsEquation} takes the form
\eqns{
Y_k = g_k(X_k) + V_k
}
with $g_k$ a linear map and with $V_k$ normally distributed, then the Kalman methodology can be used to determine the filtering and smoothing distributions. When this is not the case, the \gls{pf} methodology can be used instead, the approach exposed in \cite{Doucet2011} being one of the most popular versions. The latter applies sampling and resampling mechanisms to determine the filtering distribution with an error that is uniform in time. It is however less efficient for smoothing problems \cite{Kantas2015}, mostly because of the path degeneracy induced by the use of repeated resampling procedures.

The proposed second step toward the efficient computation of $\bsp(\varphi)$ is to use a method that enables i.i.d.\ samples to be drawn directly from the smoothing distribution $\bsp^l$ and hence avoiding path degeneracy. This has been made possible by transport methods \cite{Villani2008, Spantini2017} which are presented in the next section.

\subsection{Transport methodology}
\label{sec:transport}

The general principle of transport methods, when applied to the considered problem, is to compute a deterministic coupling between the \emph{base} probability distribution $\bseta^l$ of a convenient i.i.d.\ process on $\bbR^d$ and the \emph{target} distribution $\bsp^l$, that is to compute a mapping $\bsG^l$ from $\bbR^{d(M_l T+1)}$ to itself that pushes forward $\bseta^l$ to $\bsp^l$, i.e.\ such that
\eqns{
\bsp^l(\bsx^l) = \bsG^l_{\pf} \bseta^l (\bsx^l) \defeq \bseta^l\big((\bsG^l)^{-1}(\bsx^l)\big) \big|\det \nabla (\bsG^l)^{-1}(\bsx^l) \big|,
}
where $\nabla (\bsG^l)^{-1}(\bsx^l)$ is the gradient of the inverse transport map $(\bsG^l)^{-1}$ evaluated at~$\bsx^l \in \bbR^{d(M_l T+1)}$. In this setting, the distribution $\bseta^l$ is also assumed to be on $\bbR^{d(M_l T+1)}$. The method introduced in \cite{Spantini2017} makes use of the specific structure of $\bsp^l$, which is induced by the Markov property of the underlying diffusion process~$\bsX$, to divide the problem into a sequence of low-dimensional couplings. Each of these deterministic couplings, say $M^l_t$ for some $t \in \calT_l \setminus \{T\}$, is a mapping from $\bbR^d \times \bbR^d$ to itself which is assumed to take the form
\eqns{
M^l_t : (x_t,x_{t+h_l}) \mapsto \big(M^{l,1}_t(x_t,x_{t+h_l}), M^{l,2}_t(x_{t+h_l})\big)^{\tr},
}
for some $M^{l,1}_t : \bbR^d \times \bbR^d \to \bbR^d$ and $M^{l,2}_t : \bbR^d \to \bbR^d$. Under additional assumptions on $M^{l,1}_t$ and $M^{l,2}_t$ (see \cref{eq:assumptionM1M2} below), the mapping $M^l_t$ can be characterised by
\eqns{
(M^l_t)_{\pf} \bseta^l_{t,t+h_l} = \bspi_{t,t+h_l},
}
where the probability distribution $\bseta^l_{t,t+h_l}$ on $\bbR^d \times \bbR^d$ is the marginal of $\bseta^l$ at discretization steps $(t,t+h_l)$ and where $\bspi_{t,t+h_l}$ is related to the marginal law of $(X_t,X_{t+h_l})$ and is characterised when $t > 0$ by
\eqns{
\bspi_{t,t+h_l}(x_t,x_{t+h_l}) \propto 
\begin{cases*}
\eta^l_t(x_t) K^l\big(M^{l,2}_{t-h_l}(x_t),x_{t+h_l}\big) \ell(x_{t+h_l}, y_{t+h_l}) & if $t+h_l \in \bbN$ \\
\eta^l_t(x_t) K^l\big(M^{l,2}_{t-h_l}(x_t),x_{t+h_l}\big) & otherwise,
\end{cases*}
}
where $\eta^l_t$ is the marginal of $\bseta^l$ on $\bbR^d$ at discretization time $t$, and by
\eqns{
\bspi_{0,h_l}(x_0,x_{h_l}) \propto
\begin{cases*}
p_0(x_0) K^0(x_0,x_1) \ell(x_0, y_0) \ell(x_1, y_1) & if $l = 0$ \\
p_0(x_0) K^l(x_0,x_{h_l}) \ell(x_0, y_0) & otherwise.
\end{cases*}
}

\begin{remark}
The expression of $\bspi_{t,t+h_l}$ at level $0$ is the one corresponding to the standard state space model presented in \cite{Spantini2017}, that is
\eqnsa{
\bspi_{t,t+1}(x_t,x_{t+1}) & \propto \eta_t(x_t) K\big(M^2_{t-1}(x_t),x_{t+1}\big) \ell(x_{t+1}, y_{t+1}), \qquad t > 0 \\
\bspi_{0,1}(x_0,x_1) & \propto p_0(x_0) K(x_0,x_1) \ell(x_0, y_0) \ell(x_1, y_1),
}
where the superscripts $0$ indicating the level have been omitted.
\end{remark}

The distribution $\bseta^l$ is a design variable which is chosen to be the normal distribution $\calN(0,\bsI_{d(M_lT+1)})$ for the sake of convenience (so that $\bseta^l_{t,t+h_l} = \phi(\cdot\,; 0,\bsI_{2d})$ and $\eta^l \defeq \eta^l_t = \phi(\cdot\,; 0,\bsI_d)$  do not depend on $t$). The two components of the mapping $M^l_t$ are instrumental for the proposed approach since they allow to transport samples from a convenient distribution to samples from the filtering or smoothing distributions. The filtering case is straightforward since it holds \cite[Theorem~7.1]{Spantini2017} that $M^{l,2}_t$ pushes forward $\eta^l_{t+h_l}$ to the filtering distribution $p^l_{t+h_l}$. To obtain samples from the smoothing distribution, it is necessary to first embed $M^l_t$ into the identity function on $\bbR^{d(M_lT+1)}$, which results in a function $G^l_t$ defined as
\eqns{
G^l_t : (x_0,x_{h_l},\dots,x_T) \mapsto \big(x_0,\dots,x_{t-h_l}, M^{l,1}_t(x_t,x_{t+h_l}) , M^{l,2}_t(x_{t+h_l}) , x_{t+2h_l},\dots,x_T \big)^{\tr}.
}
It is also demonstrated in \cite[Theorem~7.1]{Spantini2017} that the desired mapping $\bsG^l$, that is the one that pushes forward $\bseta^l$ to the smoothing distribution $\bsp^l$, is defined by the composition
\eqnl{eq:compositionOfMaps}{
\bsG^l = G^l_0 \circ G^l_{h_l} \circ \dots \circ G^l_{T-h_l}.
}

\begin{remark}
It would be possible to deduce a collection $\{\tilde{G}^{l-1}_t\}_t$ of transport maps at level $l-1$ by approximating pairwise compositions of maps at level $l$ as
\eqns{
\tilde{G}^{l-1}_t \approx G^l_t \circ G^l_{t+h_l}
}
for any $t \in \calT_{l-1} \setminus \{T\}$. However, it is less clear in this case which distribution is approximated by this new collection of transport maps.
\end{remark}

Although the transport maps $M^l_t$ have been identified, their computation is not straightforward. Assuming that the mappings $M^{l,1}_t$ and $M^{l,2}_t$ are of the form
\eqnl{eq:assumptionM1M2}{
M^{l,1}_t(x_{1:d},x'_{1:d}) =
\begin{bmatrix}
M^{l,1,1}_t(x_{1:d},x'_{1:d}) \\
\vdots \\
M^{l,1,d}_t(x_d,x'_{1:d})
\end{bmatrix}
\AND
M^{l,2}_t(x_{1:d}) =
\begin{bmatrix}
M^{l,2,1}_t(x_{1:d}) \\
\vdots \\
M^{l,2,d}_t(x_d)
\end{bmatrix},
}
for any $x_{1:d},x'_{1:d} \in \bbR^d$, i.e.\ loosely speaking, that $M^{l,1}_t$ and $M^{l,2}_t$ are upper triangular, it follows that $M^l_t$ is a $\sigma$-generalised Knothe-Rosenblatt (KR) rearrangement with $\sigma = (2d,2d-1,\dots,1)$, that is, informally, a map whose $i$\textsuperscript{th} component depends only on the variables $x_{2d},\dots,x_i$ and which pushes forward the $i$\textsuperscript{th} conditional of the base distribution to the corresponding conditional of the target distribution (see \cite[Definition~A.3]{Spantini2017} for more details). In order to find $M^l_t$, we first have to solve the following optimisation problem:
\eqnl{eq:opt_prob}{
M^{l,*} = \argmin_{M} - \bbE \bigg( \log \bspi_{t,t+h_l}(S_{\sigma}(M(\bsZ))) + \sum_{i=1}^{2d} \log \partial_i M^i(\bsZ) - \log \bseta^l_{t,t+h_l}(S_{\sigma}(\bsZ))  \bigg)
}
subject to $M$ being a monotone increasing lower triangular mapping, where the expectation is \gls{wrt} $\bsZ \sim \bseta^l_{t,t+h_l}$ and where $S_{\sigma}$ is the linear map corresponding to the transposition matrix induced by $\sigma$. It follows that $M^l_t = S_{\sigma} \circ M^{l,*} \circ S_{\sigma}$ since it holds that $S_{\sigma}^{-1} = S_{\sigma}$ for the considered permutation $\sigma$. The above optimisation problem can be solved in different ways, e.g.\ by Gauss quadrature or by having recourse to Monte Carlo techniques \cite{Robert2004, Davis2007}.

The transport map $\bsG^l$ enables an approximation of $\bsp^l(\varphi)$ to be computed by drawing $N$ samples $\{\bsz_i\}_{i=1}^N$ from $\bseta^l$ and by computing the empirical average
\eqns{
\tilde\bsp^l(\varphi) \defeq \dfrac{1}{N} \sum_{i=1}^N \varphi\big( \bsG^l(\bsz_i) \big) \approx \bsp^l(\varphi).
}
The \gls{mse} corresponding to the approximation of $\bsp(\varphi)$ by $\tilde\bsp^l(\varphi)$ can be expressed as the sum of a variance term and a bias term as follows
\eqns{
\bbE\big( (\tilde\bsp^l - \bsp)(\varphi)^2 \big) = \bbE\big( (\tilde\bsp^l - \bsp^l)(\varphi)^2 \big) + (\bsp^l - \bsp)(\varphi)^2.
}

We propose to further enhance the estimation by having recourse to a multilevel strategy for which transport methods will appear to be particularly well suited.

Although the method presented in this section applies in principle to state spaces of any dimension, it is important to note that the computational cost of the corresponding algorithm can be prohibitively high even for moderate dimensions. This issue can however be mitigated by identifying some specific dependence structure between the different dimensions and by applying the same principles as the ones applied here between time steps.

\section{Multilevel Monte Carlo}
\label{sec:MLMC}

We now consider that the discretization \cref{eq:EulerGen} of the \gls{sde} \cref{eq:diffusion} is performed at different discretization levels $l \in \{0,\dots,L\}$ so that $0 < h_L < \dots < h_0 = 1$ for the considered value of $h_l$. This implies that the solution at the coarsest level $l=0$ is computationally efficient but possibly inaccurate whereas the solution at the finest level $L$ is more accurate but slower to compute. The principle of \gls{mlmc} is that the respective advantages of the coarsest and finest levels can be combined within a single estimation procedure by coupling the estimation of $\bsp(\varphi)$ for adjacent levels. More specifically, the first step is to notice that the smoothing distribution $\bsp^L$ corresponding to the discretization at level $L$ can be expressed via a telescopic sum involving the smoothing distributions $\bsp^l$ at the other levels $l < L$, that is
\eqnl{eq:telescopicSum}{
\bsp^L(\varphi) = \sum_{l=0}^L ( \bsp^l - \bsp^{l-1})(\varphi)
}
where $\bsp^{-1}$ is an arbitrary measure satisfying $\bsp^{-1}(\varphi) = 0$, e.g.\ the null measure. \Cref{eq:telescopicSum} motivates the introduction of some i.i.d.\ random variables $\{\bsX^0_i\}_{i=1}^{N_0}$ in $\bbR^{d(T+1)}$ with law $\bsp^0$ and some i.i.d.\ random variables $\{\bsX^{l,l-1}_i\}_{i=1}^{N_l}$ in the space $\bbR^{d(M_lT+1)} \times \bbR^{d(M_{l-1}T+1)}$ expressed as $\bsX^{l,l-1}_i = (\bsX^l_i,\bsX^{l-}_i)$ and such that $\bsX^l_i$ and $\bsX^{l-}_i$ have marginal laws $\bsp^l$ and $\bsp^{l-1}$ respectively, for all $l \in \{1,\dots,L\}$. This enables an approximation of $\bsp^L(\varphi)$ as
\eqnl{eq:telescopicSumApprox}{
\bsp^L(\varphi) \approx \tilde\bsp^L(\varphi) \defeq \dfrac{1}{N_0} \sum_{i=1}^{N_0} \varphi(\bsX^0_i) + \sum_{l=1}^L \dfrac{1}{N_l} \sum_{i=1}^{N_l} \big( \varphi(\bsX^l_i) - \varphi(\bsX^{l-}_i) \big).
}
This approximation of $\bsp^L$ is useful if the random variables $\bsX^0_{i_0}, \bsX^{1,0}_{i_1}, \dots, \bsX^{L,L-1}_{i_L}$ are independent of each other for all $i_0, i_1, \dots, i_L$ and if their respective components $\bsX^l_1$ and $\bsX^{l-}_1$ are as correlated as possible for all $l \in \{1,\dots,L\}$ (and hence for all random variables $\bsX^l_i$ and $\bsX^{l-}_i$ with $i \in \{1,\dots,N_l\}$ since they are i.i.d.). 

In order to determine the number of samples $N_l$ required at each level, we first express the \gls{mse} related to \cref{eq:telescopicSumApprox} as the sum of a variance term and a bias term as
\eqnl{eq:mseMl}{
\bbE\big( (\tilde\bsp^L - \bsp)(\varphi)^2 \big) = \sum_{l=0}^L \calV_l + (\bsp^L - \bsp)(\varphi)^2
}
with
\eqns{
\calV_l =
\begin{dcases*}
\bbE\Bigg( \bigg[\dfrac{1}{N_0} \sum_{i=1}^{N_0} \varphi(\bsX^0_i)  - \bsp^0(\varphi) \bigg]^2 \Bigg)& if $l=0$ \\
\bbE\Bigg( \bigg[\dfrac{1}{N_l} \sum_{i=1}^{N_l} \big( \varphi(\bsX^l_i) - \varphi(\bsX^{l-}_i) \big)  - (\bsp^l - \bsp^{l-1})(\varphi) \bigg]^2 \Bigg)& otherwise.
\end{dcases*}
}
Assuming that the bias is of order $\calO(h_L^{\alpha})$ for some integer $\alpha > 0$, it follows that a bias proportional to $\epsilon$ requires
\eqns{
L \propto -\dfrac{1}{\alpha}\log_2(\epsilon).
}
We also assume that the variance $\calV_l$ at level $l > 0$ is of order $\calO(h_l^{\beta})$ and that the cost $\calC_l$ at level $l$ is of order $\calO(h_l^{-\zeta})$ for some positive integers $\beta$ and $\zeta$. The number of samples $N_{l}$ at level $l > 1$ can then be determined by optimising the total cost $\calC = \sum_l \calC_l N_l$ for a given total variance $\calV = \sum_l \calV_l / N_l$. This leads to
\eqnl{eq:Nl}{
N_l = N_1 2^{-(\beta + \zeta)(l-1)/2},
}
so that, to obtain a \gls{mse} of order $\epsilon^2$, that is a bias of order $\epsilon$ and a total variance of order $\epsilon^2$, one must take $N_0 \propto \epsilon^{-2}$ and
\eqns{
N_1 \propto \epsilon^{-2} \sum_{l = 1}^L 2^{(\zeta - \beta)l/2}.
}
Therefore, the number of samples and the cost for a \gls{mse} of order $\calO(\epsilon^2)$ depends on the respective values of $\beta$ and $\zeta$. For instance, if $\beta > \zeta$, then both $N_1$ and $\calC$ are of order $\calO(\epsilon^{-2})$.

\subsection{Multilevel particle filter}

It is assumed in this section that the interest lies in estimating the filtering distribution $p^L_k$ at time $k$ through the multilevel identity~\cref{eq:telescopicSum}. Since it is generally difficult to sample directly from a reasonable candidate for a coupling of $p^l_k$ and $p^{l-1}_k$, one solution is to adopt a \gls{pf} strategy within the \gls{ml} formulation. In order to obtain samples that are correlated between two adjacent levels, a special joint Markov transition $Q^{l,l-1}$ can be devised together with a resampling procedure that retains the correlation of the samples. This is the principle of the \gls{mlpf} which is briefly discussed here. Assume that we have some collections of samples $\{x^l_{i,k-1}\}_{i=1}^{N_l}$ and $\{x^{l-}_{i,k-1}\}_{i=1}^{N_l}$ at time $k-1$ approximating $p^l_{k-1}$ and $p^{l-1}_{k-1}$ respectively. For all $i \in \{1,\dots,N_l\}$ and all $l \in \{1,\dots,L\}$, samples $x^l_{i,k}$ and $x^{l-}_{i,k}$ at time $k$ are produced through the Markov transition $Q^{l,l-1}((x^l_{i,k-1},x^{l-}_{i,k-1}),\cdot)$ as follows:
\begin{enumerate}[label=(\roman*)]
\item Simulate \cref{eq:EulerGen} starting from the initial condition $x_0 = x^l_{i,k-1}$ over $M_l$ time steps, denote by $x^l_{i,k}$ the obtained state of the process and by $\{u^l_t\}_{t\in\{0,h_l,\dots,1-h_l\}}$ the collection of realisations of the perturbation $U^l_t$ drawn during the procedure.
\item Using the initial condition $x^{l-}_0 = x^{l-}_{i,k-1}$, define $x^{l-}_{i,k}$ as the result of the deterministic recursion
\eqns{
x^{l-}_{t+h_{l-1}} = x^{l-}_t + h_{l-1} a(x^{l-}_t) + \sqrt{h_{l-1}} b(x^{l-}_t) ( u^l_{t} + u^l_{t+h_l} ),
}
for any $t \in \{0,h_{l-1},\dots,1-h_{l-1}\}$. This recursion is meaningful since $h_{l-1} = 2h_l$ so that $u^l_{t} + u^l_{t+h_l}$ corresponds to the noise in the step from $t$ to $t+h_{l-1}$ induced by $\{u^l_t\}_t$.
\end{enumerate}
This procedure yields $N_l$ pairs of correlated samples $\{(x^l_{i,k}, x^{l-}_{i,k})\}_{i=1}^{N_l}$ according to the predictive distribution at time $k$ given observations up to time $k-1$. The information provided by the observation $y_k$ is simply taken into account by attributing the respective weights $w^l_{i,k}$ and $w^{l-}_{i,k}$ to the samples $x^l_{i,k}$ and $x^{l-}_{i,k}$ in a similar fashion:
\eqns{
w^l_{i,k} = \dfrac{\ell(x^l_{i,k}, y_k)}{ \sum_{j=1}^{N_l} \ell(x^l_{j,k}, y_k) } \AND
w^{l-}_{i,k} = \dfrac{\ell(x^{l-}_{i,k}, y_k)}{ \sum_{j=1}^{N_l} \ell(x^{l-}_{j,k}, y_k) }.
}
Following the weighting of the samples, the difference $(p^l_k - p^{l-1}_k)(\varphi)$ can be estimated via
\eqns{
(p^l_k - p^{l-1}_k)(\varphi) \approx \sum_{i=1}^{N_l} \Big( w^l_{i,k}\varphi\big(x^l_{i,k}\big) - w^{l-}_{i,k}\varphi\big(x^{l-}_{i,k}\big) \Big).
}
Although this approximation would behave well in general, most of the sample weights would tend to $0$ if we were to apply the same procedure repeatedly in order to reach the next observation times, resulting in a rapid increase of the empirical variance. The usual way to address this problem in the standard \gls{pf} formulation is to perform resampling, that is to draw new samples from the old ones according, for instance, to the multinomial distribution induced by the weights. Applying the same approach to the \gls{mlpf} would result in the loss of the correlation between the samples at adjacent levels. A \emph{coupled} resampling is used instead as follows. For all $i \in \{1,\dots,N_l\}$ and all $l \in \{1,\dots,L\}$:
\begin{enumerate}[label=(\roman*)]
\item \label{it:coupledIndex} With probability $\rho^l_k = \sum_{i=1}^{N_l} \min\{w^l_{i,k}, w^{l-}_{i,k}\}$ draw the index $i^l$ according to the probability mass function (p.m.f.) $\hat{m}^l_k$ on $\{1,\dots,N_l\}$ characterised by
\eqns{
\hat{m}^l_k(j) = \dfrac{1}{\rho^l_k} \min\{w^l_{j,k}, w^{l-}_{j,k}\}
}
and define $i^{l-} = i^l$.
\item If \ref{it:coupledIndex} is not selected (with probability $1-\rho^l_k$), draw the indices $i^l$ and $i^{l-}$ independently according to the p.m.f.s $m^l_k$ and $m^{l-}_k$ on $\{1,\dots,N_l\}$ characterised by 
\eqns{
m^l_k(j) \propto w^l_{j,k} - \min\{w^l_{j,k}, w^{l-}_{j,k}\} \AND m^{l-}_k(j) \propto w^{l-}_{j,k} - \min\{w^l_{j,k}, w^{l-}_{j,k}\}.
}
\item Define the new pair of samples $(\tilde{x}^l_{i,k},\tilde{x}^{l-}_{i,k})$ as $(x^l_{i^l,k},x^{l-}_{i^{l-},k})$.
\end{enumerate}
Although the \emph{coupled} resampling addresses the problem of reducing the empirical variance without completely losing the correlation between samples at adjacent levels, it nevertheless has a negative impact of the \gls{ml} rate. Indeed, as demonstrated in \cite{Jasra2015}, one needs $\beta > 2\zeta$ to obtain a cost of order $\calO(\epsilon^{-2})$ for a \gls{mse} of order $\calO(\epsilon^2)$. In the case where $\beta = 2\zeta$, e.g.\ for Euler's scheme ($\zeta = 1$) with $\beta = 2$, the cost is of order $\calO(\epsilon^{-2} \log(\epsilon)^2)$.

Also, even if the \gls{mlpf} can handle smoothing on a short time window, i.e.\ it can successfully approximate the distribution of $\{X_{t'}\}_{t' \in \{t-s,t-s+1,\dots,t\}}$ given $y_0,\dots,y_t$ for small values of $s \in \bbN$, the error in the approximation of the full smoothing distribution would increase in time because of the path degeneracy effect. Indeed, resampling tends to multiply the samples of higher weights so that, after a certain number of time steps, all samples will be descendants of the same earlier sample.

\subsection{Multilevel transport}

In order to avoid the path degeneracy inherent to any \gls{pf} approach and to regain the \gls{ml} rate lost through the coupled resampling of the \gls{mlpf}, we propose to compute samples from the distributions $\bsp^l$ via the transport maps $\bsG^l$ characterised by $\bsp^l = \bsG^l_{\pf} \bseta^l$ with $\bseta^l = \phi(\cdot\,; 0,\bsI_{d(M_lT+1)})$ for all $l \in \{0,\dots,L\}$. The specific procedure is described as follows. For all $i \in \{1,\dots,N_l\}$:
\begin{enumerate}[label=(\roman*)]
\item draw a sample $\bsz_i^l = (z_{i,0}^l,z_{i,1}^l,\dots,z_{i,M_lT}^l)$ from $\bseta^l$
\item map $\bsz_i^l$ through $\bsG^l$ to obtain a sample $\bsx_i^l = \bsG^l(\bsz_i^l)$ from $\bsp^l$
\item define a \emph{thinned} sample $\bsz_i^{l-} = (z_{i,0}^l,z_{i,2}^l,\dots,z_{i,M_lT}^l)$
\item map $\bsz_i^{l-}$ through $\bsG^{l-1}$ to obtain a sample $\bsx_i^{l-} = \bsG^{l-1}(\bsz_i^{l-})$ from $\bsp^{l-1}$
\end{enumerate}
This simple procedure yields two collections $\{\bsx_i^l\}_i$ and $\{\bsx_i^{l-}\}_i$ of samples drawn from a joint distribution that obviously has marginals $\bsp^l$ and $\bsp^{l-1}$ and that correlates adjacent levels as desired. As a motivation for this coupling, note that it is optimal in terms of squared Wasserstein distance with the Euclidean metric in the case where $d=1$ and assuming that the transport maps can be computed exactly. The efficiency of the approach comes from the fact that the transport maps $\bsG^l$ have to be computed once only. Given the computation of the maps, it is relatively fast to obtain the samples. 

Although there is, strictly speaking, no path degeneracy in the considered approach, there might be some accumulation of error through time induced by the composition of transport maps defining $\bsG^l$ as in \cref{eq:compositionOfMaps}. This accumulation of error will however be seen to be milder than the one experienced by the \gls{pf} in \cref{sec:numericalStudy}.

It is assumed that the procedure underlying the computation of the transport maps is deterministic, so that there is no undesired correlations between samples from $\bsX^{l,l-1}$ and $\bsX^{l',l'-1}$ when $l \neq l'$. Further neglecting the numerical error in the computed transport maps, it follows that the expression \cref{eq:mseMl} of the \gls{mse} holds for the considered approach.

Before proceeding to a numerical study, the legitimacy of the proposed approach is verified for the linear-Gaussian case. Consider the \gls{sde} \cref{eq:diffusion} in dimension $d=1$ and with $p_0 = \delta_{x_0}$ (so that the observation at time $t=0$ has no impact). The corresponding filtering distribution at time $k \in \bbN$ and at level $l \in \{0,\dots,L\}$ simplifies to
\eqns{
p^l_k(x_k) \propto \int \prod_{n=1}^k \big[ Q^l(x_{n-1},x_n) \ell(x_n, y_n) \big] \d x_{1:k-1}
}
for any $x_k \in \bbR^d$. Denote $\hat{G}^l_k \defeq M^{l,2}_{k-h_l}$ the transport map from the base distribution $\eta^l = \phi(\cdot\,; 0,1)$ to $p^l_k$, i.e.\ such that $(\hat{G}^l_k)_{\#} \eta^l = p^l_k$. If $F_{\eta^l}$ and $F_{l,k}$ denote the cumulative distribution functions (c.d.f.) of $\eta^l$ and $p^l_k$ respectively, then it holds that $\hat{G}^l_k = F^{-1}_{l,k} \circ F_{\eta^l}$, where $F^{-1}$ is the generalised inverse
\eqns{
F^{-1}(u) = \inf\{x \in \bbR : F(x) \geq u\}, \qquad \forall u \in [0,1].
}

Considering i.i.d.\ random variables $Z_i \sim \eta^l$ for $i \in \{1,\dots,N_l\}$, the objective is to determine the order of
\eqns{
\calV_{l,k} = \Var\bigg( \dfrac{1}{N_l} \sum_{i=1}^{N_l} \Big( \varphi\big(\hat{G}^l_k(Z_i) \big) - \varphi\big( \hat{G}^{l-1}_k(Z_i) \big) \Big) \bigg)
}
\gls{wrt} $h_l$ for any function $\varphi$ that is at the intersection of the set $\calB_b(\bbR)$ of bounded measurable functions and of the set $\Lip(\bbR)$ of Lipschitz functions. Since the $Z_i$'s are i.i.d.\ and by definition of $\hat{G}^l_k$, it holds that
\eqnsa{
\calV_{l,k} & = \dfrac{1}{N_l} \Var\Big( \varphi\big(\hat{G}^l_k(Z) \big) - \varphi\big( \hat{G}^{l-1}_k(Z) \big) \Big) \\
& = \dfrac{1}{N_l} \Var\Big( \varphi\big(F^{-1}_{l,k}(U) \big) - \varphi\big( F^{-1}_{l-1,k}(U) \big) \Big) \\
& \leq \dfrac{c}{N_l} \bbE\Big( \big[F^{-1}_{l,k}(U) - F^{-1}_{l-1,k}(U) \big]^2 \Big)
}
for some $c > 0$, with $Z \sim \eta^l$ and $U \sim \calU([0,1])$, where the inequality comes from the fact that $\varphi \in \Lip(\bbR)$. The linear case is addressed in the following theorem as a proof of concept.

\begin{theorem}
\label{res:orderVarLinearGaussian}
Let $\bsX$ a $1$-dimensional diffusion process with linear drift and constant diffusion coefficient observed at all integer times through a linear-Gaussian likelihood $\ell(x, \cdot) = \phi(\cdot\,; x,\tau^2)$ for some $\tau > 0$, then the variance $\calV_{l,k}$ obtained at level $l$ for Euler's method with discretization $h_l = 2^{-l}$ and with the transport-based approach satisfies
\eqns{
\calV_{l,k} = \calO(h_l^2)
}
for any $k \in \{1,\dots,T\}$.
\end{theorem}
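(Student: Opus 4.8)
The plan is to exploit the linear-Gaussian structure, under which every distribution appearing in the construction is Gaussian and hence the transport maps are affine. First I would observe that, writing the affine drift as $a(x)=\alpha x+\gamma$ and $b\equiv\sigma$, each Euler step $x \mapsto (1+\alpha h_l)x + \gamma h_l + \sqrt{h_l}\,\sigma U$ is affine-Gaussian; composing $M_l$ of them and then conditioning on the linear-Gaussian observations keeps $p^l_k$ Gaussian, say $p^l_k = \calN(\mu_{l,k},\Sigma_{l,k})$, and the Kalman recursions provide $\mu_{l,k}$ and $\Sigma_{l,k}$ explicitly. Because $\eta^l = \phi(\cdot\,;0,1)$ has c.d.f.\ $F_{\eta^l}=\Phi$, the map reduces to $\hat G^l_k(z) = F^{-1}_{l,k}(\Phi(z)) = \mu_{l,k} + \sqrt{\Sigma_{l,k}}\,z$. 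Substituting into the bound already established and writing $W\defeq\Phi^{-1}(U)\sim\calN(0,1)$, the cross term vanishes and I obtain the identity
\eqns{ \bbE\big[ (F^{-1}_{l,k}(U) - F^{-1}_{l-1,k}(U))^2 \big] = (\mu_{l,k}-\mu_{l-1,k})^2 + \big(\sqrt{\Sigma_{l,k}}-\sqrt{\Sigma_{l-1,k}}\big)^2 . }
It therefore suffices to prove that the filter mean and standard deviation differ between adjacent levels by $\calO(h_l)$.

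Next I would reduce this to a statement about the one-observation-interval transition. Over $(k-1,k]$ there are $M_l=2^l$ Euler steps, whose composition gives $X_k = A_l X_{k-1} + B_l + \calN(0,R_l)$ with $A_l=(1+\alpha h_l)^{M_l}$ and explicit geometric-sum expressions for $B_l,R_l$. A logarithm/Taylor expansion shows $A_l = e^{\alpha}+\calO(h_l)$, and similarly $B_l,R_l$ converge to their exact continuous-time limits at rate $\calO(h_l)$; since $h_{l-1}=2h_l$ this yields $A_l-A_{l-1}=\calO(h_l)$, $B_l-B_{l-1}=\calO(h_l)$ and $R_l-R_{l-1}=\calO(h_l)$. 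This is the quantitative, level-to-level form of the weak order-one property of the Euler scheme and is the key input.

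I would then propagate these bounds through the Kalman recursion by a finite induction on $k\in\{0,\dots,T\}$. The prediction map $(\mu,\Sigma)\mapsto(A_l\mu+B_l,\,A_l^2\Sigma+R_l)$ and the update map $(\bar\mu,\bar\Sigma)\mapsto(\bar\mu+\tfrac{\bar\Sigma}{\bar\Sigma+\tau^2}(y_k-\bar\mu),\,\tfrac{\bar\Sigma\tau^2}{\bar\Sigma+\tau^2})$ are smooth, and on the relevant parameter range they are Lipschitz both in $(\mu,\Sigma)$ and in the coefficients $(A_l,B_l,R_l)$. Crucially, $\bar\Sigma_{l,k}\geq R_l$ is bounded below by a positive constant uniformly in $l$, while the update forces $\Sigma_{l,k}\leq\tau^2$, so for $k\geq 1$ the trajectory stays in a compact set bounded away from degeneracy, uniformly in $l$ and in $k\le T$. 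Starting from $\mu_{l,0}=x_0$, $\Sigma_{l,0}=0$ (identical across levels since $p_0=\delta_{x_0}$), the inductive step combines the $\calO(h_l)$ closeness of the coefficients with the Lipschitz bounds to carry $|\mu_{l,k}-\mu_{l-1,k}|+|\Sigma_{l,k}-\Sigma_{l-1,k}|=\calO(h_l)$ from $k-1$ to $k$; finiteness of $T$ keeps the accumulated constant finite.

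Finally, since $\Sigma_{l,k}$ is bounded below by a positive constant for $k\geq 1$, the map $s\mapsto\sqrt s$ is Lipschitz on the relevant interval, so $(\sqrt{\Sigma_{l,k}}-\sqrt{\Sigma_{l-1,k}})^2=\calO(h_l^2)$ and likewise $(\mu_{l,k}-\mu_{l-1,k})^2=\calO(h_l^2)$; feeding these back through the displayed identity and the bound $\calV_{l,k}\le\frac{c}{N_l}\bbE[\,\cdot\,]$ gives $\calV_{l,k}=\calO(h_l^2)$. I expect the main obstacle to be the first quantitative input, namely establishing the $\calO(h_l)$ level-to-level closeness of the transition coefficients together with the uniform nondegeneracy of $\Sigma_{l,k}$ that keeps the update map Lipschitz; once these are in hand the propagation is a routine finite induction, and the affine form of the transport map makes the variance computation exact rather than merely asymptotic.
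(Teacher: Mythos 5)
Your proposal is correct and follows essentially the same route as the paper: both reduce $\calV_{l,k}$ to the level-to-level differences of the Kalman filter mean and standard deviation, establish that the coefficients of the per-observation-interval Euler transition (e.g.\ $(1+h_l a)^{M_l}$ and the geometric-sum variance term) are within $\calO(h_l)$ of level-independent limits, and propagate this through the filtering recursion by induction on $k$. The only differences are bookkeeping: the paper carries explicit expansions of the form $c_k + r_{k,l}h_l + \calO(h_l^2)$ through the induction, whereas you propagate the $\calO(h_l)$ differences directly via Lipschitz stability of the prediction/update maps, together with the uniform nondegeneracy of the filter variance --- a point the paper leaves implicit but which your argument makes explicit.
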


\begin{proof}
The objective is to compute the order of
\eqns{
F^{-1}_{l,k}(u) - F^{-1}_{l-1,k}(u) = \hat\mu_{l,k} - \hat\mu_{l-1,k} + \sqrt{2}\erf^{-1}(2u-1) (\hat\sigma_{l,k} - \hat\sigma_{l-1,k})
}
\gls{wrt} $h_l$, where $\erf^{-1}$ is the inverse error function and where the updated mean $\hat\mu_{l,k}$ and standard deviation $\hat\sigma_{l,k}$ at level $l$ and at time $k$ can be found through the Kalman filter to be
\eqnsa{
\hat\mu_{l,k} = \mu_{l,k} + \dfrac{\sigma^2_{l,k}(y - \mu_{l,k})}{\tau^2 + \sigma_{l,k}^2} \AND
\hat\sigma_{l,k}^2 = \dfrac{\tau^2\sigma_{l,k}^2}{\tau^2 + \sigma_{l,k}^2}
}
with $\mu_{l,k}$ and $\sigma_{l,k}$ the predicted mean and standard deviation expressed as
\eqns{
\mu_{l,k} = (1 + h_l a)^{M_l}\hat\mu_{l,k-1} \AND
\sigma_{l,k}^2 = (1 + h_l a)^{2M_l}\hat\sigma^2_{l,k-1} + h_l b^2 \sum_{i=0}^{M_l-1}(1 + h_l a)^{2i}.
}
First, the predicted mean $\mu_{l,k}$ and standard deviation $\sigma_{l,k}$ have to be developed to the second order. The main term appearing in the expressions of $\mu_{l,k}$ is
\eqns{
(1 + h_l a)^{M_l} = \sum_{n=0}^{M_l} \dfrac{a^n}{n!} \prod_{i=0}^{n-1} \big[ h_l (M_l - i)\big] = \sum_{n=0}^{M_l} \dfrac{a^n}{n!} + \dfrac{h_l}{2} \sum_{n=2}^{M_l} \dfrac{a^n}{(n-2)!} + \calO(h_l^2),
}
For the sake of compactness we define
\eqns{
A_m = \sum_{n=0}^m \dfrac{a^n}{n!} \AND B_m = \sum_{n=2}^m \dfrac{a^n}{(n-2)!}.
}
Assuming that
\eqnmla{eq:proof:assumedForm}{
\hat{\mu}_{l,k-1} & = c_{k-1} + r_{k-1,l} h_l +  \calO(h_l^2) \\
\hat{\sigma}_{l,k-1} & = c'_{k-1} + r'_{k-1,l} h_l + \calO(h_l^2)
}
where $c_{k-1}$ and $c'_{k-1}$ do not depend on $l$, and where $r_{k-1,l}$ and $r'_{k-1,l}$ are of order $\calO(1)$ \gls{wrt} $h_l$, it follows that
\eqnsa{
\mu_{l,k} & = \hat\mu_{l,k-1} \Big( A_{M_l} + \dfrac{h_l}{2} B_{M_l} \Big) + \calO(h_l^2) \\
& = c_{k-1}  A_{M_l} + r_{k-1,l} h_l A_{M_l} + h_l\dfrac{c_{k-1}}{2} B_{M_l} + \calO(h_l^2).
}
Recalling that $M_l = 2^l$ and noticing that
\eqns{
A_{M_l} = e^a - \sum_{n \geq M_l + 1} \dfrac{a^n}{n!} = e^a + o(h_l)
}
with $o(h_l)$ referring to terms that are negligible in front of $h_l$, $\mu_{k,l}$ can be seen to be of the same form as $\hat\mu_{k,l}$, that is
\eqns{
\mu_{l,k} = c_{k-1} e^a + r_{k-1,l} h_l e^a + h_l\dfrac{c_{k-1}}{2} B_{M_l} + \calO(h_l^2).
}
The same type of expansion can be used for the first term in the variance $\sigma_{l,k}^2$ as follows 
\eqns{
\sigma_{l,k}^2 = c'^2_{k-1} e^a + 2c'_{k-1} r'_{k-1,l} h_l e^a + h_l\dfrac{c'^2_{k-1}}{2} B_{2^{l+1}} + b^2 h_l \sum_{i=0}^{M_l-1}(1 + h_l a)^{2i} + \calO(h_l^2).
}
The second term has however a slightly different form and must be studied on its own:
\eqnl{eq:secondTermStdDev}{
h_l\sum_{i=0}^{M_l-1}(1 + h_l a)^{2i} = \sum_{n = 0}^{2M_l - 2} \Bigg( h_l^{n+1} a^n \sum_{i=\lceil n/2 \rceil}^{M_l - 1} \binom{2i}{n} \Bigg)
}
where it appears that
\eqns{
h_l^{n+1} a^n\sum_{i=\lceil n/2 \rceil}^{M_l-1} \binom{2i}{n} \leq h_l^{n+1} a^n \sum_{i=1}^{M_l} \dfrac{(2i)^n}{n!} = \dfrac{(2a)^n}{(n+1)!}
}
where the r.h.s.\ tends exponentially fast to $0$ when $n \to \infty$. It follows that \cref{eq:secondTermStdDev} is of the form $s + o(h_l)$ where $s$ does not depend on $l$, so that
\eqns{
\sigma_{l,k}^2 = c'^2_{k-1} e^a + 2c'_{k-1} r'_{k-1,l} h_l e^a + h_l\dfrac{c'^2_{k-1}}{2} B_{2^{l+1}} + s b^2 + \calO(h_l^2),
}
from which the expansion of the standard deviation $\sigma_{l,k}$ can be expressed as
\eqns{
\sigma_{l,k} = \sqrt{C_l} + \dfrac{h_l}{2\sqrt{C_l}}\bigg( 2c'_{k-1} r'_{k-1,l} h_l e^a + \dfrac{c'^2_{k-1}}{2} B_{2^{l+1}} \bigg) + \calO(h_l^2) 
}
where $C_l = e^a c'^2_{k-1} + s b^2$ is the term of order $\calO(1)$ in $\sigma_{l,k}^2$. We conclude that
\eqns{
\mu_{l,k} - \mu_{l-1,k} = h_l \big(r_{k-1,l} A_{2^l}  - 2 r_{k-1,l-1} A_{2^{l-1}} \big) + h_l \dfrac{c_{k-1}}{2} \big(B_{2^l}  - 2B_{2^{l-1}} \big) + \calO(h_l^2) = \calO(h_l).
}
Similarly, it holds that $\sigma_{l,k} - \sigma_{l-1,k} = \calO(h_l)$. Proceeding to the updated terms, it holds that
\eqnsa{
\sigma^2_{l,k}(y_k - \mu_{l,k}) & = (e^a c'^2_{k-1} + s b^2) (y_k - c_{k-1} e^a) + \calO(h_l) \\
\tau^2 + \sigma^2_{l,k} & = \tau^2 + (e^a c'^2_{k-1} + s b^2) + \calO(h_l),
}
so that
\eqnsa{
\hat\mu_{l,k} & = \mu_{l,k} + \dfrac{\sigma^2_{l,k}(y_k - \mu_{l,k})}{\tau^2 + \sigma^2_{l,k}}
= c_{k-1} e^a + \dfrac{(e^a c'^2_{k-1} + s b^2) (y_k - c_{k-1} e^a)}{\tau^2 + (e^a c'^2_{k-1} + s b^2)} + \calO(h_l)\\
\hat\sigma_{l,k} & = \dfrac{\tau^2\sigma^2_{l,k}}{\tau^2 + \sigma^2_{l,k}} = \dfrac{\tau^2 (e^a c'^2_{k-1} + s b^2)}{\tau^2 + (e^a c'^2_{k-1} + s b^2)} + \calO(h_l).
}
If follows from reasoning by induction that $\hat\mu_{l,k}$ and $\hat\sigma_{l,k}$ have the form assumed in \cref{eq:proof:assumedForm} for all $k \in \{0,\dots,T\}$, the result being obvious for $k=0$. Combining the different results it can be easily verified that
\eqnsa{
\hat\mu_{l,k} - \hat\mu_{l-1,k} = \calO(h_l) \AND \hat\sigma_{l,k} - \hat\sigma_{l-1,k} = \calO(h_l),
}
which yields $\calV_{l,k} = \calO(h_l^2)$ as desired. This concludes the proof of the \lcnamecref{res:orderVarLinearGaussian}.
\end{proof}

\section{Numerical study}
\label{sec:numericalStudy}

In this section, the effectiveness of the proposed method is shown in simulations for different \gls{sde} models. Numerical verifications of some of the considered assumptions are also provided. The scenarios considered for simulation are the same as for the \gls{mlpf} in \cite{Jasra2015}, so that results can be compared.

\subsection{Linear Gaussian}

The first simulation study is performed on the linear-Gaussian case with $a=-0.1$, $b=1$ and with a likelihood $\ell(x, \cdot) = \phi(\cdot\,; x,\tau^2)$ with $\tau = 0.25$ which corresponds to an observation process of the form
\eqnl{eq:linearObservation}{
Y_k \given X_k \sim \calN(0,\tau^2).
}
The initial distribution is $p_0 = \phi(\cdot\,; 0,\sigma)$ with $\sigma = 1$ and the final time is $T=4$. A realisation of the state and observation processes are shown in \cref{fig:fourLevels} together with the mean and some percentiles corresponding to samples drawn from the smoothing distribution. The involved transport maps\footnote{The solver used for the determination of the transport maps is the one provided at \url{http://transportmaps.mit.edu/docs/index.html}}, say $T$, are assumed to be triangular maps which $i$\textsuperscript{th} component $T^{(i)}$ takes the form
\eqns{
T^{(i)}(x_1,\dots,x_i) = a_i(x_1,\dots,x_{i-1}) + \int_0^{x_i} b_i(x_1, \dots, x_{i-1},t)^2 \d t
}
where $a_i$ and $b_i$ are real-valued functions defined on $\bbR^{i-1}$ and $\bbR^i$ respectively. For any $j \leq i-1$, it is assumed that the functions $x_j \mapsto a_i(x_1,\dots,x_{i-1})$ and $x_j \mapsto b_i(x_1,\dots,x_{i-1},t)$ are Hermite Probabilists' functions extended with constant and linear components whereas the function $t \mapsto b_i(x_1,\dots,x_{i-1},t)$ is assumed to be a Hermite Probabilists' function extended with a constant component only. Then, the functions $a_i$ and $b_i$, when expressed as functions from $\bbR^{i-1}$ and $\bbR^i$ respectively, take the form
\begin{align*}
a_i(x_1,\dots,x_{i-1}) & = \sum_{k = 1}^{2d(o_{\mathrm{m}}+1)} c_k \Phi_k(x_1,\dots,x_{i-1}) \\
b_i(x_1,\dots,x_{i-1},t) & = \sum_{k = 1}^{2do_{\mathrm{m}}} c'_k \Psi_k(x_1,\dots,x_{i-1},t)
\end{align*}
with $o_{\mathrm{m}}$ the map order, with $\{c_k\}_{k \geq 1}$ and $\{c'_k\}_{k \geq 1}$ some collections of real coefficients and with $\Phi_k$ and $\Psi_k$ basis functions based on the above mentioned Hermite Probabilists' functions. In the simulations, the case $o_{\mathrm{m}} = 4$ is considered.

The integration in \cref{eq:opt_prob} is performed using a Gauss quadrature of order $10$ in each dimension. The optimisation relies on the Newton-CG algorithm (Newton algorithm using the conjugate-gradient method for each step) with a tolerance of $10^{-4}$.

\begin{figure}
\centering
\includegraphics[trim=70pt 70pt 100pt 90pt,clip,width=.8\textwidth]{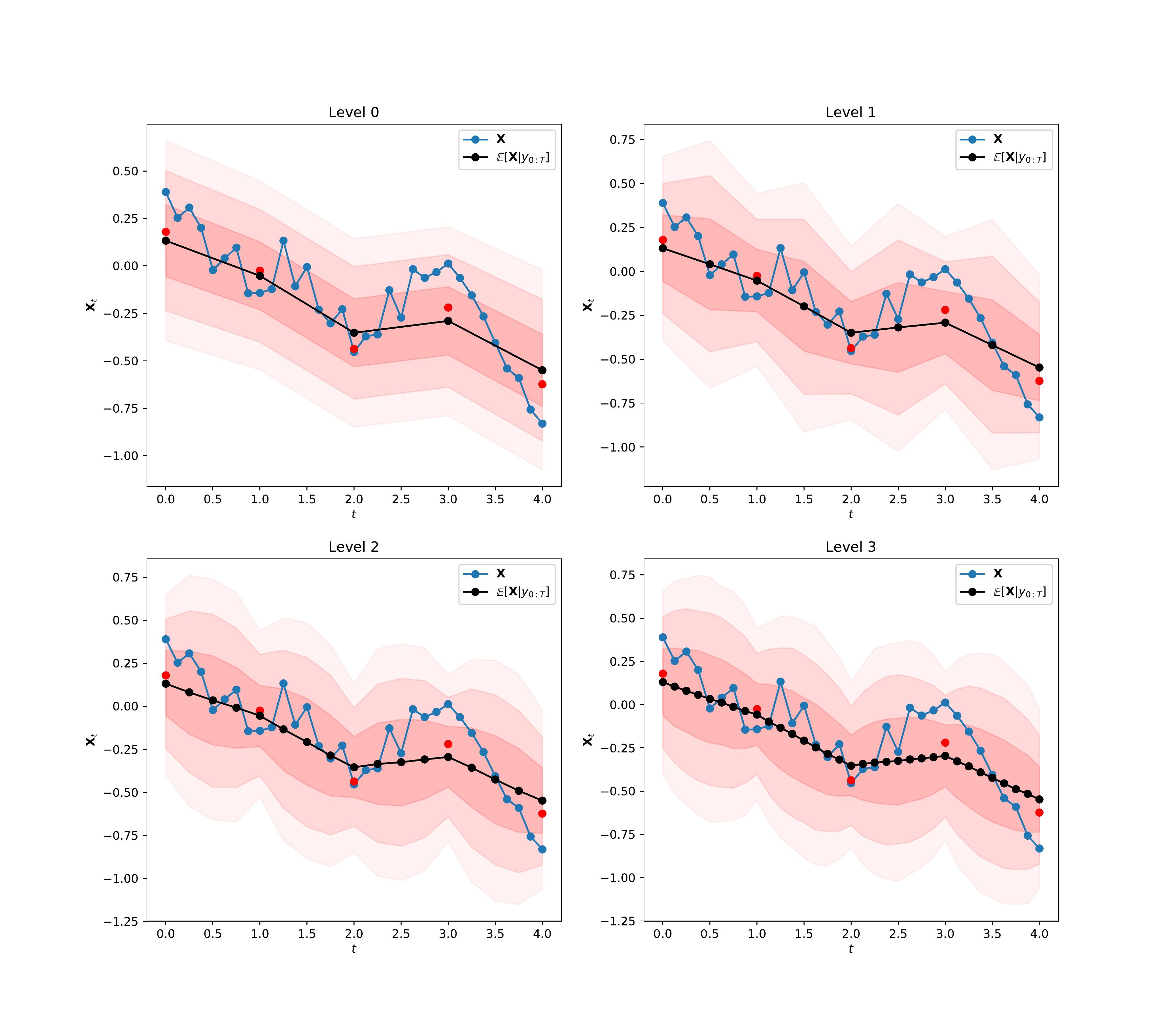}
\caption{Mean and percentiles of samples generated according to the target distribution of the linear-Gaussian \gls{sde} at four consecutive levels (blue line: state of the process; red dots: observations; black line: samples mean; red areas: $1$-$99$, $5$-$95$ and $20$-$80$ percentiles).}
\label{fig:fourLevels}
\end{figure}

\subsubsection*{\gls{mlmc} rates} The behaviour of the numerical scheme for different levels is displayed in \cref{fig:varDiff_cost}, where $\Var(\varphi(\bsX^l) - \varphi(\bsX^{l-1}))$ is considered with $\varphi(x_{0:T}) = x_T$ and where the cost is the computational time required to obtain one sample at a given level $l$. This result confirms the applicability of multilevel techniques by showing that $\calV_l = \calO(h_l^2)$ and $\calC_l = \calO(h_l^{-1})$, that is $\beta = 2$ and $\zeta = 1$.

One important point is that the time spent to obtain samples at a high level is small when compared to the time required to compute the underlying transport map. For instance, it takes about $25\s$ to calculate the transport map at level $5$ while a sample is obtained in $0.00025\s$, so that a $100,000$ samples can be drawn in the time spent to compute the map. It is therefore necessary to verify that the gain obtained with the multilevel approach is not compensated by the additional time spent computing more transport maps (one for each level).

\begin{figure}
\centering
\subfloat[Linear-Gaussian]{%
    \label{fig:varDiff_cost}%
    \includegraphics[width=0.495\textwidth]{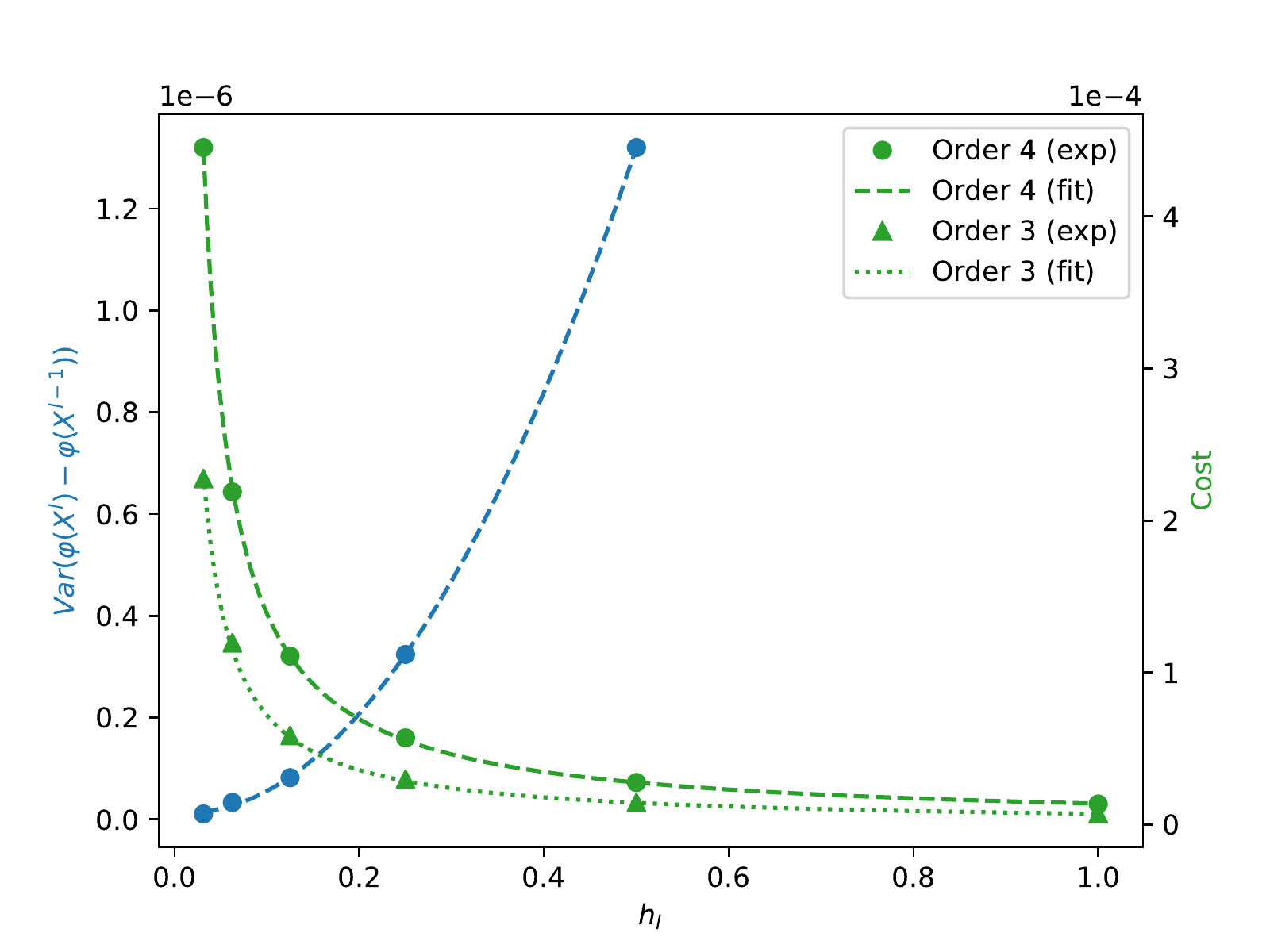}%
}%
\subfloat[Langevin]{%
    \label{fig:varDiff_cost_LD}%
    \includegraphics[width=0.495\textwidth]{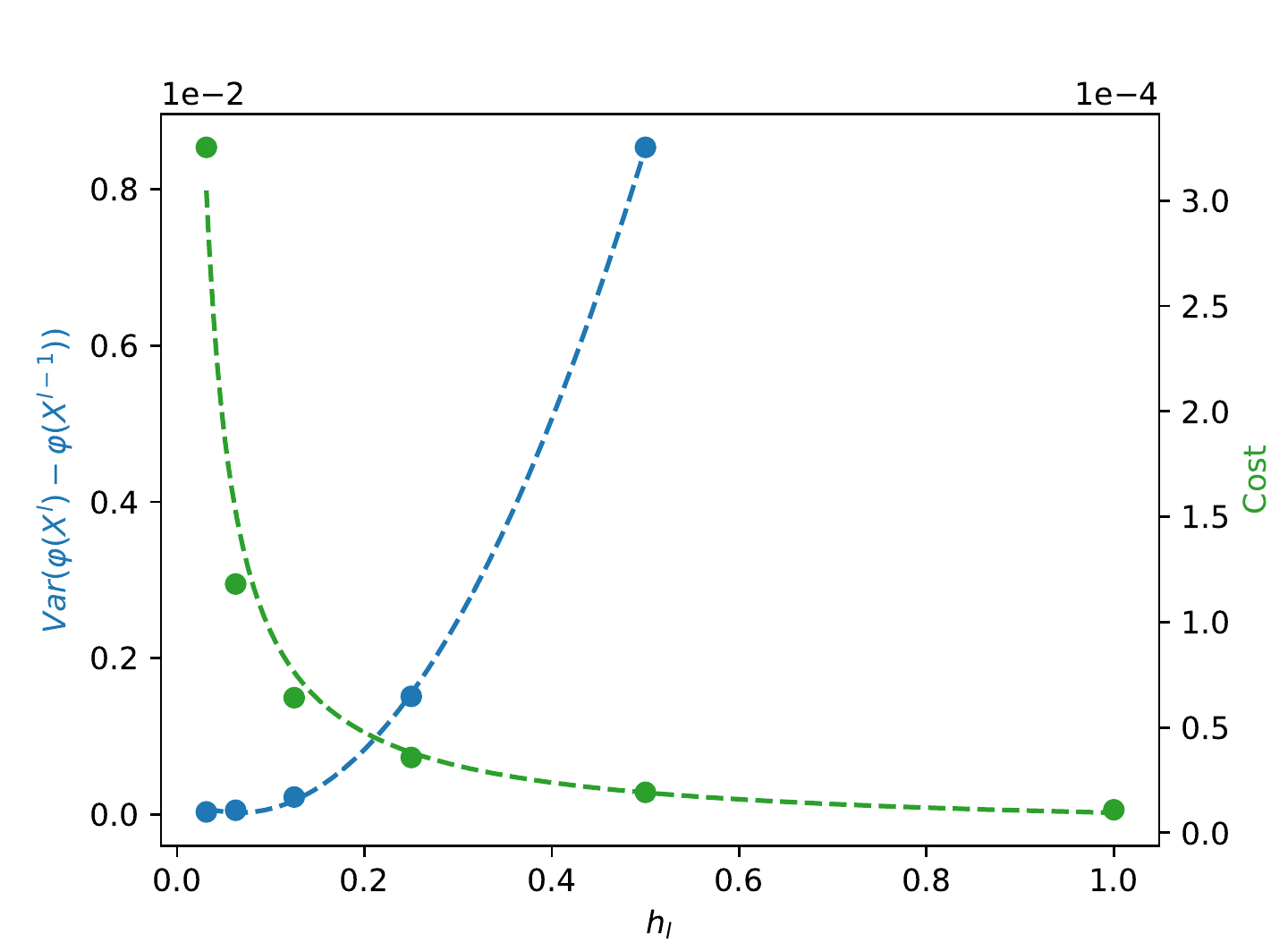}%
}\\
\subfloat[Non-linear diffusion]{%
    \label{fig:varDiff_cost_NLD}%
    \includegraphics[width=0.495\textwidth]{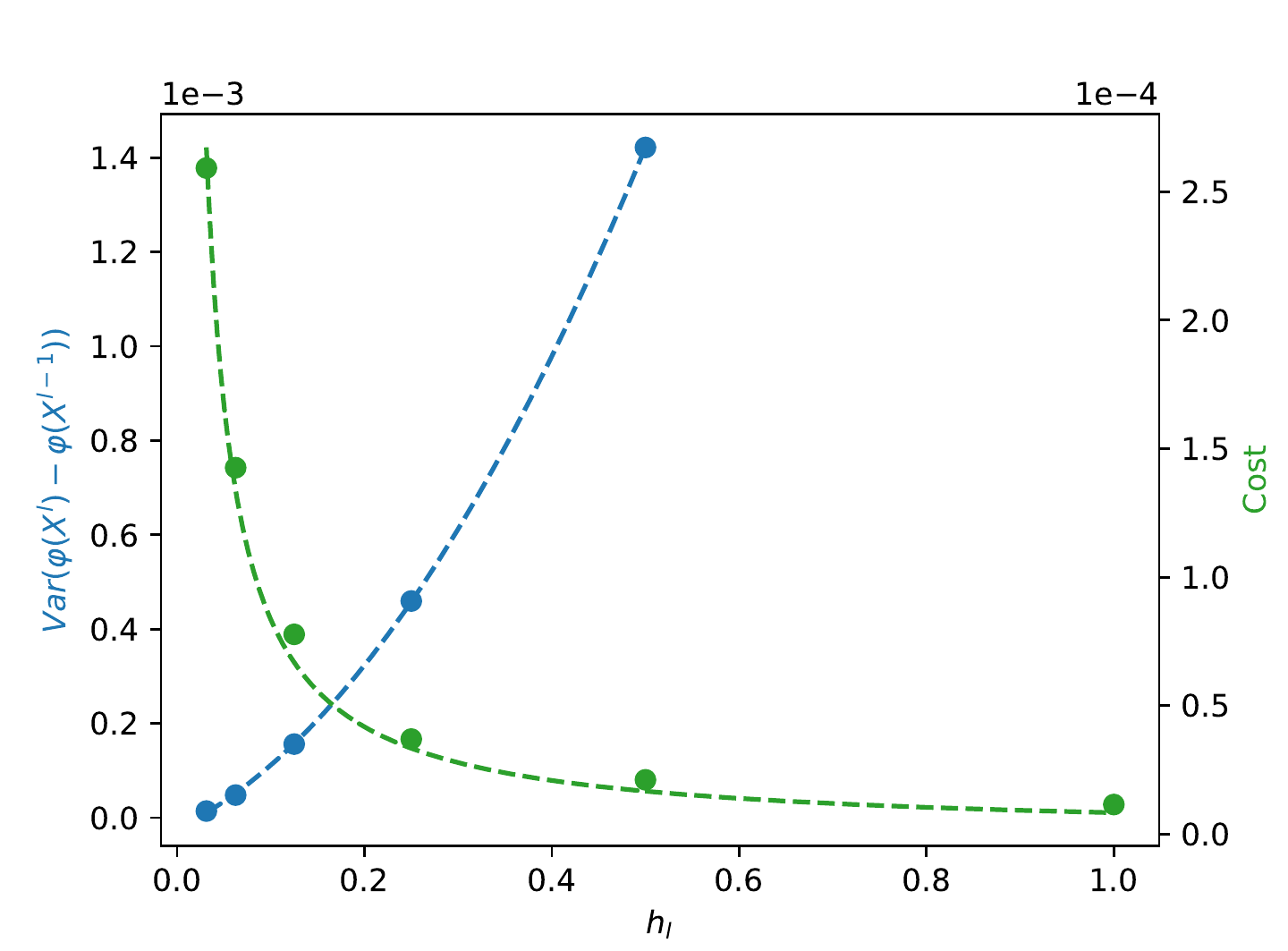}%
}%
\caption{Variance of $\varphi(\bsX^l) - \varphi(\bsX^{l-1})$ with $\varphi(x_{0:T}) = x_T$ and cost as a function of $h_l$  (Blue dashed line: poly.\ fit of order $2$; green dashed line: least-square fitting of the form $a/h_l$). The experimental (exp) cost for two map-approximation orders are indicated in the linear-Gaussian case together with their corresponding least-square fittings (fit).}
\label{fig:varDiff_cost_all}
\end{figure}

\subsubsection*{Multilevel vs computation at the highest level} The objective with the multilevel approach is to reduce the computational cost to reach a given error when compared to computations at the highest level only. This aspect is verified in \cref{fig:MLvsHL_LG} where the multilevel approach appears to outperform the one based on samples at the highest level. The above-mentioned fact that calculation of the transport maps might be time-consuming is shown to be compensated by the efficiency of the multilevel approach within a reasonable time interval. This is in spite of the fact that the multilevel approach nearly doubles the number of maps to be computed. In particular, in the considered linear-Gaussian scenario, the average computational cost for the calculation of the maps in the multilevel and highest-level approach is respectively $10.76\s$ and $6.15\s$.

More specifically, \cref{fig:MLvsHL} is obtained by first computing all the required transport maps and then by generating samples by batches of $1000$. The multilevel estimate is obtained by sweeping the different levels sequentially until the predetermined number $N_l$ of samples has been computed at level $l$. The number $N_0$ of samples at level $0$ is fixed to $2^{13} \times 1000$ for all the considered \glspl{sde}, that is $2^{13}$ batches of $1000$ samples. The number of samples at level $1$ is determined by the ratio between the variance at levels $0$ and $1$ and the number of samples for the subsequent levels are computed through \cref{eq:Nl}.

\begin{figure}
\centering
\subfloat[Linear-Gaussian $\varphi(x_{0:T}) = x_T$]{%
    \label{fig:MLvsHL_LG}%
    \includegraphics[width=0.495\textwidth]{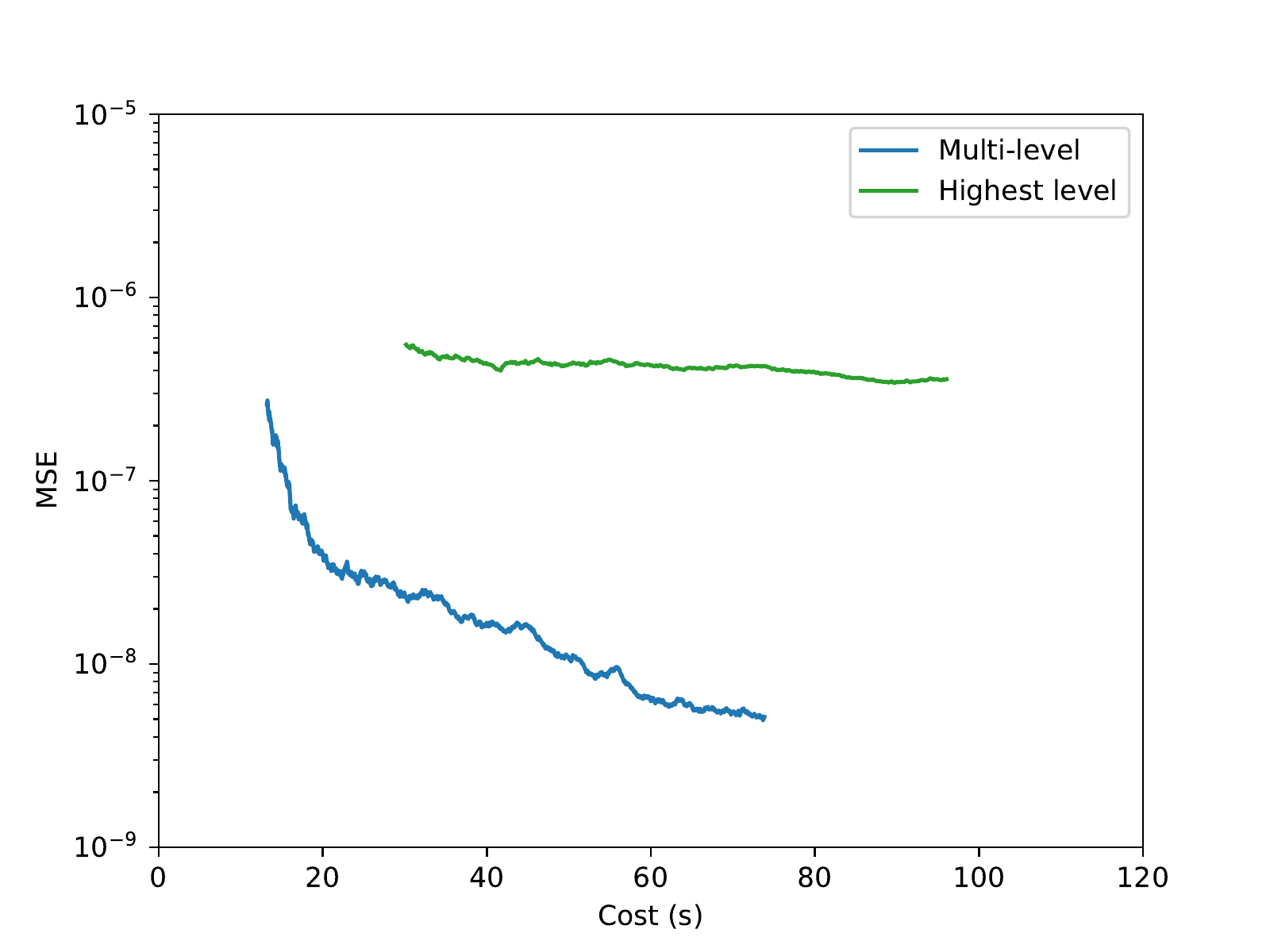}%
}%
\subfloat[Langevin $\varphi(x_{0:T}) = x_T$]{%
    \label{fig:MLvsHL_LD}%
    \includegraphics[width=0.495\textwidth]{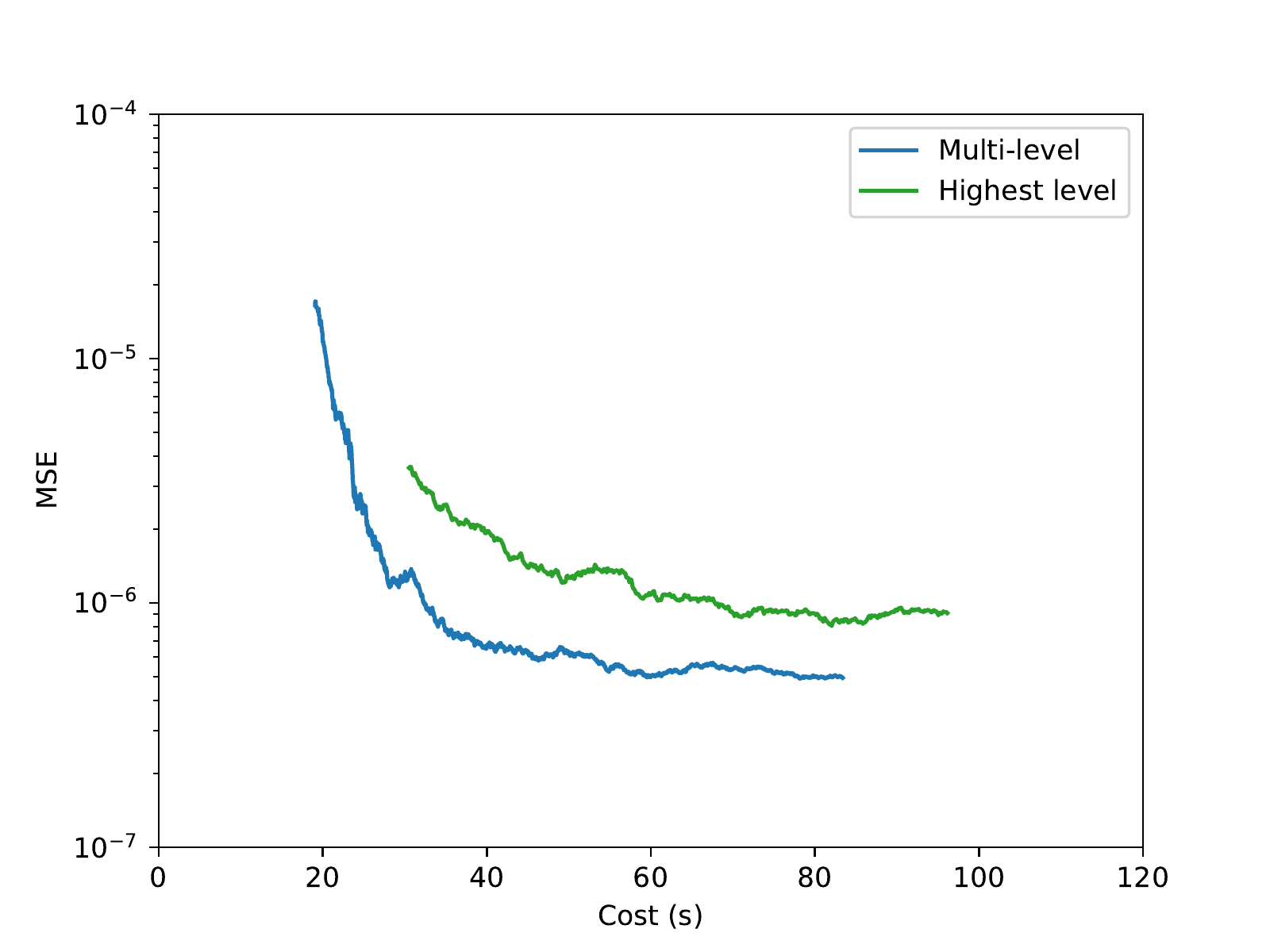}%
}\\
\subfloat[Langevin $\varphi(x_{0:T}) = \sum_{t=0}^T e^{-\kappa(T-t)}x_t$]{%
    \label{fig:MLvsHL_LD_2}%
    \includegraphics[width=0.495\textwidth]{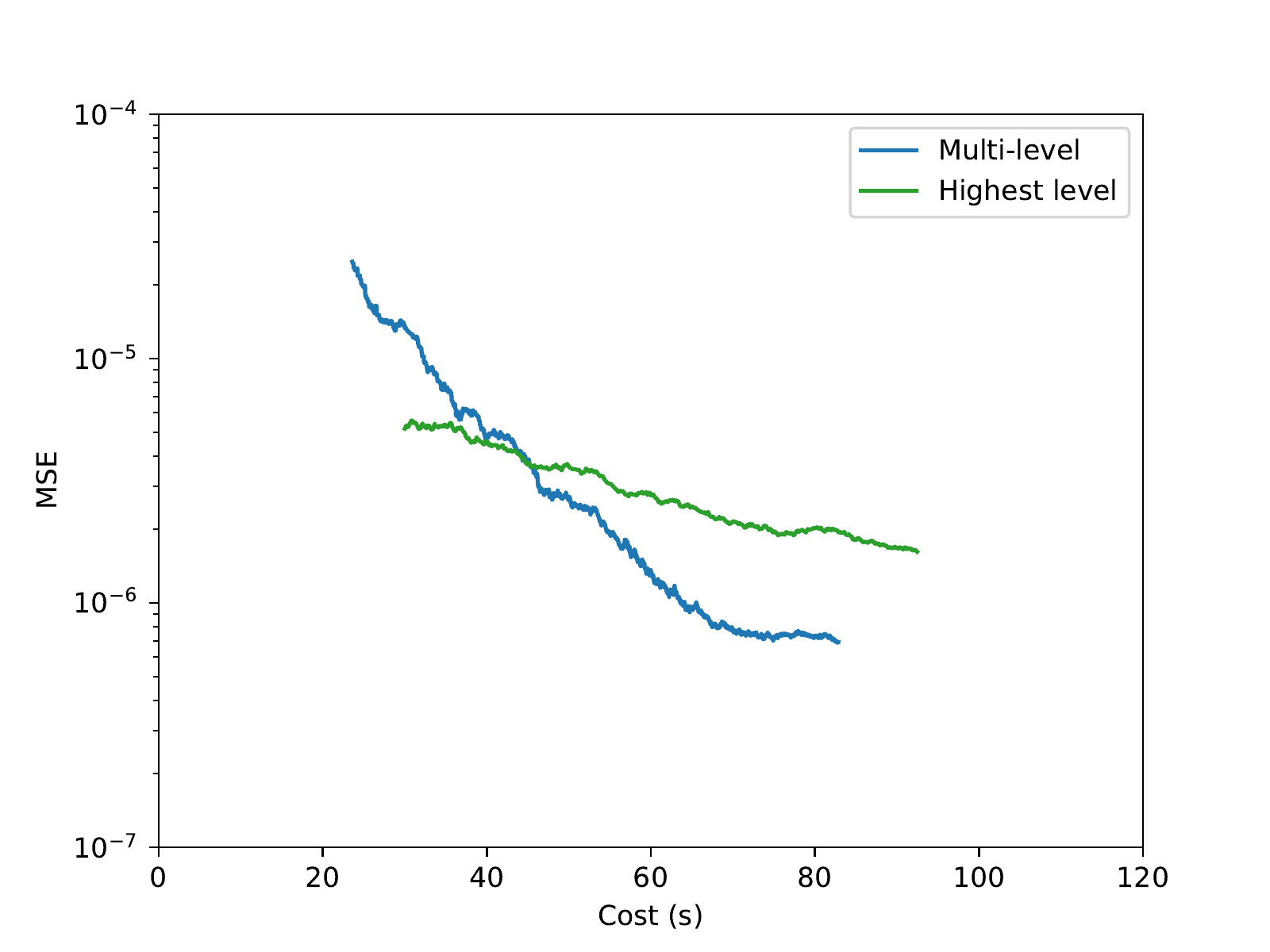}%
}%
\subfloat[Non-linear diffusion $\varphi(x_{0:T}) = x_T$]{%
    \label{fig:MLvsHL_NLD}%
    \includegraphics[width=0.495\textwidth]{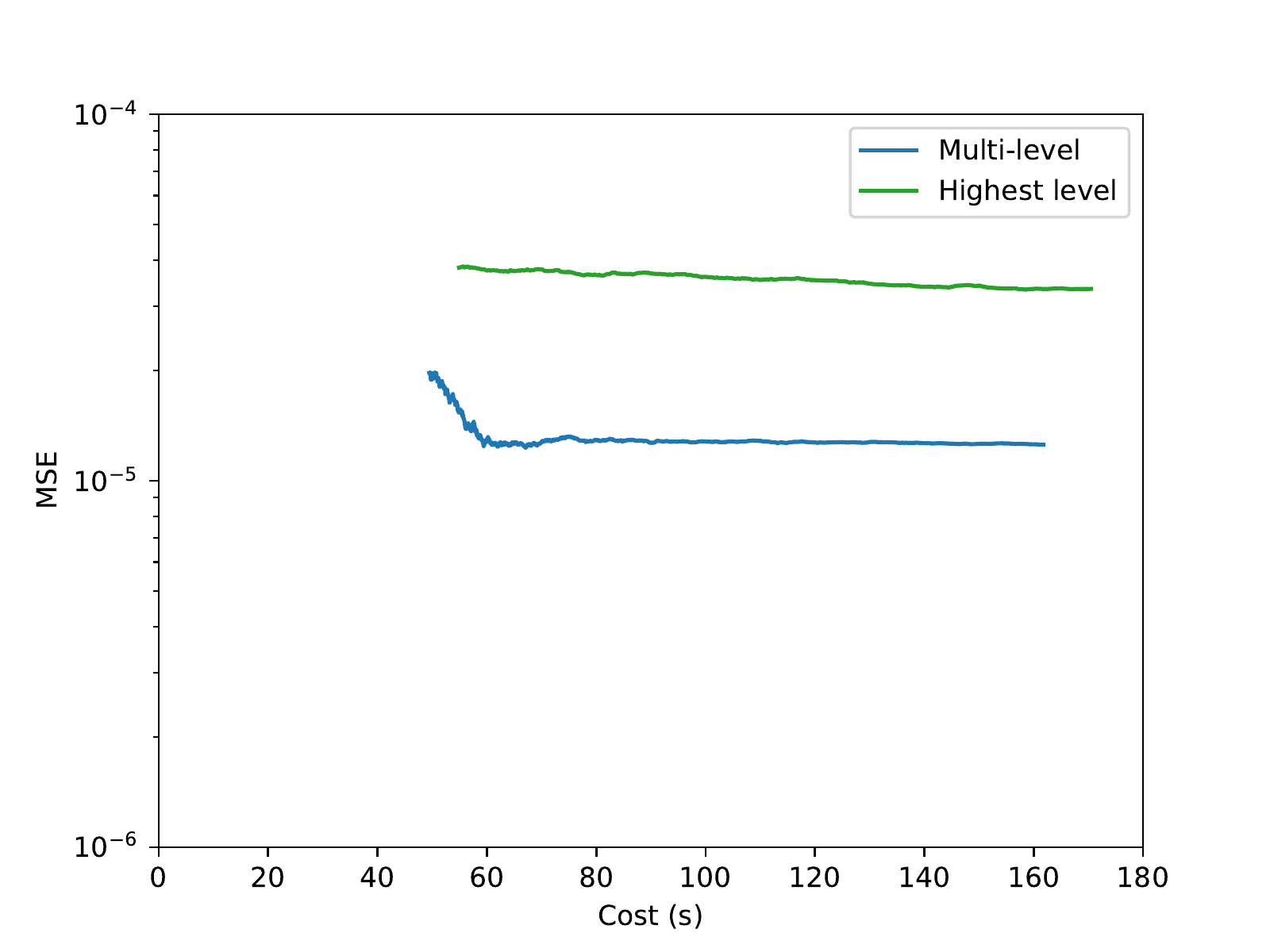}%
}%
\caption{MSE vs.\ cost for the multilevel approach compared with computations at the highest level $L=4$ (semi-log scale, averaged over 50 Monte Carlo simulations). The first $200$ iterations are not displayed.}
\label{fig:MLvsHL}
\end{figure}

\subsection[Langevin SDE]{Langevin \gls{sde}}

We now consider a Langevin \gls{sde} of the form
\eqns{
\d X_t = \dfrac{1}{2} \nabla \log \calS_{\nu} (X_t) \d t + b\, \d W_t, \qquad t \in [0,T]
}
where $\calS_{\nu}$ is the Student's t distribution with $\nu=10$ degrees of freedom and with $b = 1$. The observations are generated according to
\eqnl{eq:obsExpVar}{
Y_k \given X_k \sim \calN\big(0,\tau^2 \exp(X_k)\big)
}
with $\tau = 1$. The initial distribution is the same as in the previous example. A realisation of the considered Langevin \gls{sde} is shown in \cref{fig:fourLevels_LD} together with mean and percentiles of samples obtained using transport maps. It appears clearly on this figure that the observation process characterised by \cref{eq:obsExpVar} is less informative than the one modelled by \cref{eq:linearObservation}. \Cref{fig:varDiff_cost_LD} shows that the considered Langevin \gls{sde} also displays a variance of order $\calO(h_l^2)$, although the actual values are much higher than in the linear-Gaussian case, which might be due to both the nature of the \gls{sde} and the quality of the approximation of the transport maps. A comparison of the computational efficiency of the multilevel approach is given in \cref{fig:MLvsHL_LD} where the proposed method is seen to outperform the approach based on computations at the highest level. The time needed to initialise the latter, i.e.\ the time to compute the transport map at level $L=4$ and to perform the first $200$ iterations, is however slightly less affected than with the multilevel approach. \Cref{fig:MLvsHL_LD_2} shows the performance of the proposed approach with a different functional, that is
\eqns{
\varphi(x_{0:T}) = \sum_{t=0}^T \exp(-\kappa(T-t))x_t,
}
which gives the sum of the states at the observations weighted by a forgetting factor~$\kappa$, with $\kappa = 2$ in the simulations. In this case, the tolerance of the optimisation is also adapted to the level as follows: the tolerance at level $l$ is $10^{-l-1}$. This helps retaining the benefits of the multi-level approach in this more challenging smoothing problem.

\begin{figure}
\centering
\includegraphics[trim=70pt 70pt 100pt 90pt,clip,width=.8\textwidth]{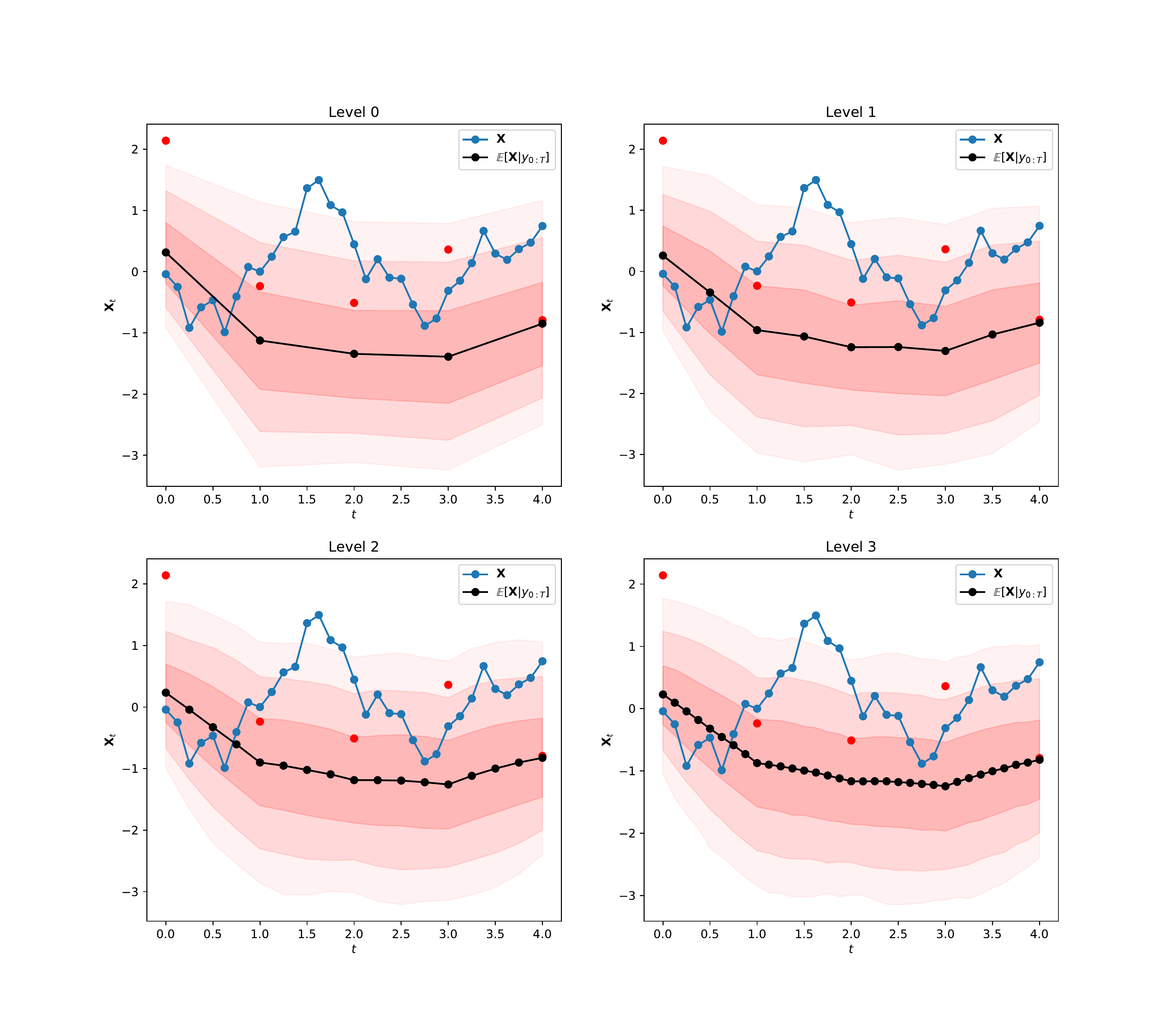}
\caption{Mean and percentiles of samples generated according to the target distribution of the Langevin \gls{sde} at four consecutive levels (blue line: state of the process; red dots: observations; black line: samples mean; red areas: $1$-$99$, $5$-$95$ and $20$-$80$ percentiles).}
\label{fig:fourLevels_LD}
\end{figure}

\subsection{Nonlinear diffusion}

We now consider a \gls{sde} with a nonlinear diffusion term:
\eqns{
\d X_t = \theta (\mu - X_t) \d t + \dfrac{\varsigma}{\sqrt{1+X_t^2}} \d W_t, \qquad t \in [0,T]
}
with $\theta = 1$, $\mu = 1$ and $\varsigma = 1$ and with a time step of $0.5$ between observation times, so that the final time is $T=2$. The linear-Gaussian observation model \cref{eq:linearObservation} is considered with $\tau = 1$. The initial distribution is the same as in the previous examples. A realisation of the considered \gls{sde} is displayed in \cref{fig:fourLevels_NLD} together with mean and percentiles of samples obtained using transport maps. \Cref{fig:varDiff_cost_NLD} shows that the same rates as in the previous cases apply although the contribution of the quadratic term in the variance is smaller than before. It appears in \cref{fig:MLvsHL_NLD} that the time spent computing the transport maps has largely increased for both approaches when compared to the linear-Gaussian and Langevin \glspl{sde}. This might be due to the challenging nature of the problem which induces a slower convergence on the involved optimisation methods. However, the proposed method still displays a significant gain in performance, although the first 200 iterations just gave it enough time to compensate for the computational overhead caused by the calculation of the maps at all level.

\begin{figure}
\centering
\includegraphics[trim=70pt 70pt 100pt 90pt,clip,width=.8\textwidth]{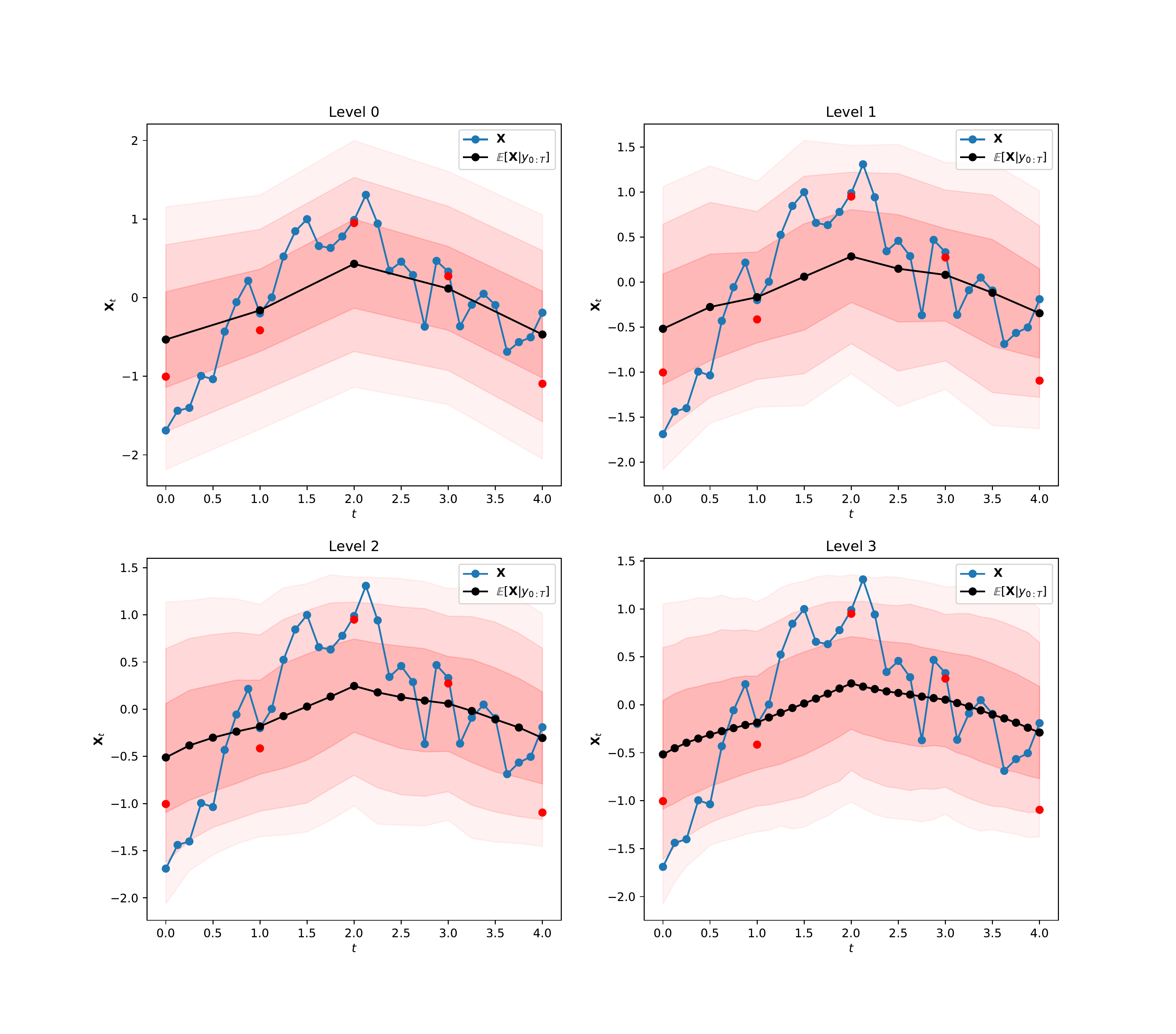}
\caption{Mean and percentiles of samples generated according to the target distribution of the \gls{sde} with nonlinear diffusion at four consecutive levels (blue line: state of the process; red dots: observations; black line: samples mean; red areas: $1$-$99$, $5$-$95$ and $20$-$80$ percentiles).}
\label{fig:fourLevels_NLD}
\end{figure}

\section{Conclusion}

An algorithm for the determination of expectations with respect to laws of partially-observed \glspl{sde} has been proposed. The observations are received at discrete times and depend only on the state at the time they occurred, hence enabling a standard state space modelling to be used. The proposed method relies on three principles:
\begin{enumerate*}[label=(\roman*)]
\item the discretization of the considered \gls{sde}, for instance with Euler's method,
\item the expression of the smoothing distribution at a given level as a telescopic sum involving coarser discretizations and
\item the generation of pairs of samples correlated across adjacent levels via the application of different transport maps to samples from a common base distribution.
\end{enumerate*}
As opposed to \gls{mlpf}, the proposed approach retains the ``ideal'' \gls{mlmc} rates, since, in particular, it does not require resampling techniques to be used. In addition to a numerical verification of its performance, the proposed method was shown to have the desired behaviour in the linear-Gaussian case. Future works include the theoretical verification of the rates that are observed in practice for more diverse types of \glspl{sde}, as well as the study of the optimal parametrisation of the transport maps as a function of the discretization level.

\subsection*{Acknowledgements}

The authors would like to thank the Associate Editor as well as the referees for their detailed comments and suggestions for the manuscript. All authors were supported by Singapore Ministry of Education AcRF tier 1 grant R-155-000-182-114. AJ was also supported under KAUST CRG4 Award Ref:2584. AJ is affiliated with the Risk Management Institute, OR and analytics cluster and the Center for Quantitative Finance at NUS.  

\bibliographystyle{siamplain}
\bibliography{Transport}

\end{document}